\DeclarePairedDelimiter\ceil{\lceil}{\rceil}
\DeclarePairedDelimiter\floor{\lfloor}{\rfloor}
\newtheorem{thm}{Theorem}
\newtheorem{lem}{Lemma}
\newtheorem{cor}{Corollary}
\newtheorem{example}{Example}
\newtheorem{defn}{Definition}
\newtheorem{rem}{Remark}
\def\BibTeX{{\rm B\kern-.05em{\sc i\kern-.025em b}\kern-.08em
    T\kern-.1667em\lower.7ex\hbox{E}\kern-.125emX}}
\title{Fundamental Limits of Distributed Computing for Linearly Separable Functions\thanks{This research was partially supported by European Research Council ERC-StG Project SENSIBILITÉ under Grant 101077361, the ERC-PoC Project LIGHT under Grant 101101031, by the Huawei France-Funded Chair Toward Future Wireless Networks, and by the Program “PEPR Networks of the Future” of France 2030.}}
\author{K. K. Krishnan Namboodiri, Elizabath Peter, Derya Malak, and Petros Elia\thanks{The authors are with the Communication Systems Department, EURECOM, Sophia Antipolis, FRANCE (\{karakkad,peter,malak,elia\}@eurecom.fr).\hfill Manuscript last revised: {\today}.}}
\begin{document}

\maketitle
\begin{abstract}
This work addresses the problem of distributed computation of linearly separable functions, where a master node with access to $K$ datasets, employs $N$ servers to compute $L$ user-requested functions, each defined over the datasets. Servers are instructed to compute subfunctions of the datasets and must communicate computed outputs to the user, who reconstructs the requested outputs. The central challenge is to reduce the per-server computational load and the communication cost from servers to the user, while ensuring recovery for any possible set of $L$ demanded functions.

We here establish the fundamental communication–computation tradeoffs for arbitrary $K$ and $L$, through novel task-assignment and communication strategies that, under the linear-encoding and no-subpacketization assumptions, are proven to be either exactly optimal or within a factor of three from the optimum. In contrast to prior approaches that relied on fixed assignments of tasks -- either disjoint or cyclic assignments -- our key innovation is a nullspace-based design that jointly governs task assignment and server transmissions, ensuring exact decodability for all demands, and attaining \emph{optimality over all assignment and delivery methods}. To prove this optimality, we here uncover a duality between nullspaces and sparse matrix factorizations, enabling us to recast the distributed computing problem as an equivalent factorization task and derive a sharp information-theoretic converse bound. Building on this, we establish an additional converse that, for the first time, links the communication cost to the covering number from the theory of \emph{general covering designs}. 
\end{abstract}
\begin{IEEEkeywords}
Coded distributed computation, linearly separable functions, covering designs, matrix factorization, communication-computation tradeoff. 
\end{IEEEkeywords}
\section{Introduction}

Recent advances in machine learning have rendered distributed computing indispensable for managing increasingly intensive computing workloads. 
This distributed paradigm -- as exemplified by frameworks such as MapReduce \cite{DeG} and Spark \cite{ZCFSS} -- asks for decomposition of large-scale computational tasks into smaller subtasks and their execution in parallel across multiple servers, thereby overcoming the limited capacity of individual nodes to process massive datasets.

Despite their success, distributed computing systems face several fundamental challenges such as having to contend with unreliable or slow computing nodes (so-called stragglers \cite{LMA, OUG,LLPPR,DFHJCG,YMA,RDT}), protecting the privacy of sensitive datasets, and guarding against adversarial behavior by malicious nodes \cite{XBW, WSJC3, YuS, YSun}. Beyond these issues of reliability and security, two intrinsic constraints dominate system design: servers have finite computational power, and communication links have finite capacity. These intertwined constraints bring to the fore the well-known \emph{communication–computation tradeoff}, which lies at the heart of distributed computing across diverse settings such as MapReduce \cite{LMYA, YaYW, PLE, BrEl,PNR,WaW, WCJ}, gradient coding \cite{YAb, RPPA, TLDK}, and distributed matrix multiplication \cite{YMA, RTV, LSR, DFHJCG}, to mention just a few. These same constraints have inspired the development of various coding-theoretic techniques, proposed to reduce communication load, indeed often at the expense of increased computation or storage at the servers \cite{LMYA, YaYW, WCJ,WSJC2}.

This same fundamental tradeoff lies at the core of our study, where we examine the $(K,L,M)$ distributed linearly separable function computation problem over a finite\footnote{We will show later on that many of our results hold for any field including the field of real and complex numbers.} field \(\mathbb{F}_q\). This problem motivates us as it arises in many domains such as signal processing, machine learning, control theory \cite{VTr, SDa, NGSA}, etc. In this setting, a user requests $L$ function values from a master node, each corresponding to the output of an independent function $F_\ell, \ell \in [L],$ acting on a dataset library $\mathcal{W}$ consisting of $K$ independent and identically distributed (i.i.d.) datasets. As we clarify later, these functions are linearly separable over $K$ basis subfunctions $f_j, j \in [K]$ of the datasets, enabling the master node to parallelize the computations across $N$ distributed servers. Each server is assigned a subset of no more than $M$ chosen subfunctions to compute, each evaluated on a dataset stored by that server. Each server is thus said to enjoy a finite computational capability $M$. After completing their computations, the servers transmit linear combinations of their computed values to the user, who then must recover the $L$ requested function outputs. Thus, the objective, for any given $(K,L,M)$ instance of the problem, is to design a distributed computing scheme that best allocates tasks across the servers, and best provides communication schemes, in order to minimize the communication cost needed for delivering the desired functions.

To put this objective into context, and to get a sense of the corresponding tradeoff between computation ($M$) and communication, we quickly note that as one might expect, having $M=K$ maps to a centralized setting, where the single server makes $L$ transmissions to serve the $L$ functions. On the other extreme, having $M=1$ corresponds to a scenario where each server can compute only one subfunction, thus forcing $N=K$ transmitting servers, and a maximum communication cost of $K$. We aim to identify and meet the optimal tradeoff in between these two extreme scenarios, by properly assigning the subfunctions to the servers in a way that minimizes the total amount of data transferred between the servers and the user.

\subsection{Related Works}

Our setting enjoys numerous connections with various themes of distributed computing. We proceed to mention some of these connections, while also clarifying key differentiating factors. 

Our work is most closely related to \cite{WSJC,WSJC2}, which study a similar problem of linearly separable function computation with $N$ servers, a single user, multiple requested functions over $K$ datasets, and an objective of minimizing communication cost for a given $M$.  The main distinction with our work is that \cite{WSJC,WSJC2} emphasize on straggler mitigation, and on the specific case of the cyclic task assignment; based on this cyclic structure -- and always with a focus on accounting for stragglers -- they proceed to design novel communication schemes as well as useful information-theoretic converses.   Related work can also be found in the more recent~\cite{KhE,KhE2}, which consider the \emph{multi-user version} of the problem in \cite{WSJC} without stragglers, where now each server is connected to multiple, but not all, users, each asking for their own desired function. Both these works~\cite{KhE,KhE2} aim to reduce the communication and computation cost, the first by exploiting the properties of covering codes, and the second by employing optimal tilings from tessellation theory. In this multi-user setting, the first work provides optimality under the restrictive assumption that all possible demands are simultaneously requested by the users, and the second work -- once translated onto our single-user setting -- operates under various restrictive assumptions of disjoint tasks across servers. These assumptions do not appear in our work. 

Our problem also has strong natural links to \emph{distributed source coding for function computation}, which often relates to our setting; Multiple sources (that play the role of servers with access to datasets) communicate via parallel channels to a destination (our user) who wishes to compute various functions of the data at the sources. Such works (cf.~\cite{KoM,SlW,HaK,Wag,LPVKP}) make significant progress on the challenging case of correlated source data (which, in our setting, corresponds to datasets or task outputs), by designing efficient compression schemes that exploit these correlations while taking into account the structure of the requested functions. In particular,  various works focus on identifying the \emph{function \textcolor{black}{computation} capacity}, corresponding to the maximum rate at which distributed encoders (servers) compress their observations so that a decoder recovers a function of sources. Key works include \cite{AFKZ}, which introduced cut-set bounds for independent sources, and \cite{GYYL}, which refined the approach to yield tighter bounds. More recent works include \cite{GuZ} on the two-source arithmetic sum under channel asymmetry, and \cite{FeM} (see also \cite{MSSE}) on multiple sources and general functions. Another related direction concerns the compression of vector-linear functions. For instance, \cite{LFPNG} studies the problem where a single user aims to recover multiple linear combinations of the sources (losslessly or with loss), designing schemes based on nested linear codes, while \cite{GSZ} provides capacity characterizations under specific source–decoder connectivity, emphasizing on specific instances (e.g., $K=3$, $N=2$). 
The related literature is broad, and while our overview is necessarily selective, all prior approaches remain fundamentally different from ours; In addition to the fact that such studies often (though not always) focus on the special case $M=1$ (one dataset per source/server), the key distinction from our work lies in source design: in distributed source coding the sources are fixed, whereas in our setting they are shaped through careful task assignment. It is this key overarching distinction, together with having larger $M$, that introduces challenging combinatorial optimization elements in designing schemes and converses, which we address in our study.

Another related and well-studied direction is \emph{Coded MapReduce} \cite{LMYA}, which focuses on distributed computation of separable functions under a framework where servers compute intermediate function values and exchange these, via coded messages, to obtain the inputs needed for their final assigned tasks. The challenge here is to jointly design the task assignment and internode communication strategy in order to best mitigate the communication bottleneck. Variants have been proposed to handle stragglers \cite{CCW}, and to account for predefined task assignments \cite{WaW}, heterogeneous channels \cite{BWW, PNR}, and different network connectivities \cite{BrEl}, often borrowing techniques from device-to-device coded caching. Related problems of function retrieval with communication minimization have also been studied in cache-aided broadcast systems \cite{WSJTC} and broadcast networks with side information \cite{YaJ, MaT}. While the Coded MapReduce setting carries certain similarities to our setting, such as for example the need to carefully assign datasets and tasks across the servers, the unifying distinction between these works and our line of work -- beyond a deep dependence on subpacketization, the fundamentally different topology and the presence of data-exchange among the servers -- is that they critically rely on side information and focus on its use in alleviating the rank-one communication bottleneck.

Further connections to our setting can be found in the rich line of work that studies the application of distributed computing in certain high-impact tasks. Associated to our specific case $L=1$ (a single requested function), this includes \emph{gradient coding} \cite{YAb,TLDK,RTTD}, which develops techniques to mitigate stragglers in distributed learning. Here, the master computes a single gradient with the help of $N$ servers ($N$ depends on $K$, $M$, and the number of tolerated stragglers), each evaluating partial gradients. Most schemes use cyclic or fractional repetition task assignment strategies \cite{YAb,TLDK,RTTD}, with additional 
variants to be found in~\cite{RPPA,LKAS,OUG,GJWC}. 
Another related direction -- motivated in part by the growing computational and communication demands of machine learning -- concerns \emph{distributed matrix–vector and matrix–matrix multiplication}, with the majority of works emphasizing on straggler-prone settings \cite{LLPPR, RTV, LSR, YMA, DFHJCG}. In these works, the input matrices are encoded, and subsets of the encoded submatrices are assigned to servers for computation. Most existing schemes adopt either erasure-coding techniques or polynomial-based methods for encoding. Related studies \cite{DCG, WLSY} focus on computing linear transforms of high-dimensional vectors under both straggler effects and explicit computational load constraints (see also the survey \cite{RDT}). Several extensions have incorporated additional privacy and security guarantees as well~\cite{YLRKSA, XBW, YuS, JiJ, ChR, MLG, DEK, MaH, HBW}. A key distinction from our work is the highly specific nature of the target function and the primary focus on mitigating stragglers, which typically directs the problem toward coded task assignments inspired by error-correcting codes.

\subsection{Contributions}
In this work, we study distributed linearly separable function computation with the goal of minimizing the total communication cost $R$ from servers to the user. In particular, the user requests $L$ linearly separable functions of the $K$ basis subfunctions ${f_j(W_j): j \in [K]}$, captured by a demand matrix $\mathbf{D} \in \mathbb{F}_q^{L \times K}$, where $\mathcal{W}=\{W_1,\dots,W_K\}$ is the dataset library at the master node. The challenge arises from the limited computating capability of each server, which can compute at most $M\leq K$ subfunctions. We seek the fundamental limits of the communication–computation ($R$–$M$) tradeoff for arbitrary $K$ and $L$, under the assumption that servers transmit only linear combinations of their computed subfunctions, without subpacketization. 
The problem reduces to jointly designing (i) the dataset/task assignment under a strict per-server budget $M$, and (ii) the transmissions that allow the user to recover all requested functions without error. As we clarify later on, performance will be measured by the worst-case communication cost over all possible demands.

The technical contributions of this work are summarized in the following.   

\begin{itemize}
    \item In terms of achievability, we first introduce a novel technique that provides a sufficient condition for designing both task assignments and server transmissions. The core idea (Lemma~\ref{lemma1}) leverages designed left nullspaces of carefully selected submatrices of the demand matrix to guarantee exact decodability when $K \leq L+M-1$. In this regime, we construct a `nullspace-based matrix' (cf.~\eqref{eq:NullMat}) for any task assignment to test feasibility, and, if feasible, the same matrix specifies the transmissions. Moreover, for all $K,L,M$ satisfying $K \leq L+M-1$, we explicitly provide feasible assignments and transmissions achieving the globally optimal communication cost. While prior works \cite{WSJC,WSJC2} employed nullspace ideas to design transmissions, our contribution extends this principle to task assignment itself. The defining distinction of our approach is that the nullspace design dictates the task allocation, in contrast to \cite{WSJC,WSJC2}, where cyclic assignment predetermined the nullspace structure. This extension not only allows for reaching the global optimal over all task assignments, but also ensures exact decodability of all demands, going beyond the probabilistic guarantees in \cite{WSJC,WSJC2}\footnote{Indeed, for small field sizes $q$, those schemes may fail on a significant fraction of demand matrices (see Table I in \cite{WSJC2}).}.
    
A few additional design elements appear in the subsequent Schemes 1 and 2, designed for the case of \(K>L+M-1\).
    \begin{itemize}
        \item Scheme 1 (Theorem~\ref{thm1}) applies a column-wise partition of the demand matrix into submatrices, on each of which the nullspace-based design for task assignment and transmissions is applied independently. We further identify the optimal partition that minimizes communication cost. The scheme works for any $(K,L,M)$, and the corresponding cost takes the form $R = \min(K, L\lceil K/(L+M-1)\rceil)$.
        
        \item Scheme 2 (Theorem~\ref{thm2}) is distinguished by admitting a task assignment independent of the demand matrix. It applies in the regime of $M \geq K/2$, where performance is paradoxically improved by augmenting the demand matrix with carefully designed virtual demands, enabling a more efficient fusion of the partitioning and nullspace approaches. To further broaden its applicability, the scheme also incorporates a row-wise partitioning technique. 
    \end{itemize}
    \item A key contribution of this work is the development of new converses for distributed linearly separable function computation. We first establish an information-theoretic converse via a novel degrees-of-freedom argument on sparse matrix factorizations, showing Scheme 1 to be within a small constant factor from the optimal. Furthermore, we present an additional converse that, for the first time, identifies the general covering number from combinatorial design theory as a fundamental limiting factor for communication cost in distributed computing. Furthermore, our work provides a new class of covering designs, that yield even tighter converses. All together, these new connections yield converses that are proven tight, matching the performance of our designed achievability schemes, thus revealing a surprising and deep structural equivalence.  
    \begin{itemize}
        \item To achieve the above, we first reformulate our problem as a sparse matrix factorization problem  $\mathbf{D}_{L\times K}=\mathbf{C}_{L\times R}\mathbf{A}_{R\times K}$ where the challenge is to minimize the inner dimension $R$ under the constraint that each row of $\mathbf{A}$ has at most $M$ non-zero elements. This formulation yields a novel information-theoretic lower bound (Theorem~\ref{th:conv}) on the communication cost of the $(K,L,M)$ problem, derived under the no-subpacketization assumption. The converse stems from the idea that, in the factorization \(\mathbf{D}=\mathbf{C}\mathbf{A}\), the total degrees of freedom available in choosing \(\mathbf{A}\) must exceed those in \(\mathbf{C}\) for a given demand matrix \(\mathbf{D}\). These degrees of freedom are then bounded using combinatorial counting arguments and matrix entropic inequalities to obtain the lower bound. This derived lower bound is subsequently used to show that under basic divisibility conditions, the communication cost of Scheme 1 is within a factor of \(2\) from the optimal when the underlying field size \(q\geq eK/M\), while when the divisibility condition is not met, the optimality gap is at most \(3\) (cf. Theorem~\ref{thm:gap}).
        \item In Theorem~\ref{thmcovering}, we derive another lower bound for the \((K,L,M)\) distributed linearly separable function computation problem. This bound establishes a novel connection between the distributed computing setting and combinatorial design theory. Specifically, for given \(K\), \(L\), and \(q>K\), we show that the existence of a novel combinatorial structure, here called a \emph{multi-level covering design\footnote{This new design will be a restricted version of the classical covering design.}} with appropriate parameters, is necessary for the existence of a sufficiently sparse matrix factorization for some \(\mathbf{D}\in \mathbf{F}_q^{L\times K}\), and thus is necessary for any given rate $R$ to be achievable. This relation allows the communication cost of the distributed computing problem to be lower bounded by the multi-level covering number of the associated design, as well as by the general covering number of classical general covering designs. Although determining the multi-level and general covering numbers is difficult, we derive a lower bound for the case $L=2$ 
        using non-trivial counting arguments, which, interestingly, coincide with the communication cost of Scheme 1 when \((M+1)|K\).
    \end{itemize}
    \item Of importance is also Corollary~\ref{cor:RealField} which clarifies that the achievable schemes described in Theorem~\ref{thm1} and Theorem~\ref{thm2}, as well as the converse in Theorem~\ref{thmcovering}, remain valid even when the computation is performed over any field, including the field of real numbers. 
    \item Finally, in addition to the main results, we also study the increase in communication rate when, for any fixed $M$, we reduce the number of computing servers. Our tradeoff analysis, reflected in Theorem~\ref{thm6}, shows that under our assumptions, the communication rate penalty remains small across a wide range of parameters, while the reduction in server resources is significant. Viewed in reverse, this means that nearly the same cumulative communication cost can be sustained even with substantially reduced overall computing power, simply by using fewer servers.
        
\end{itemize}

\subsection{Organization}
Section~\ref{sec:sysmodel} formally defines the distributed computing problem of linearly separable functions. Section~\ref{sec:mainres} describes the achievable schemes proposed for different settings, and Section~\ref{sec:converse} presents the converse bounds for the communication cost achieved by a distributed linearly separable function computing scheme. Before we conclude the paper, Section~\ref{RvsN} explores the tradeoff between the communication cost and the number of servers used in the system. 
To help the reader follow the exposition, we intersperse various examples throughout the paper.

\subsection{Notations and Definitions}
For a positive integer $n$, we let $[n] = \{ 1,2, \ldots,n\}$. For $m$,  $n \in \mathbb{Z}^{+}$ such that $m <n$, $[m:n]$ denotes the set $\{m, m+1,\ldots,n\}$, and $m \mid n$ denotes $m$ divides $n$. 
All vectors are assumed to be column vectors. For a vector $\mathbf{x}$, the number of non-zero entries in $\mathbf{x}$ is denoted by $||\mathbf{x}||_0$. 
A vector $\mathbf{x}$ is said to be $k-$sparse if $||\mathbf{x}||_0 \leq k$.
Binomial coefficients are denoted by $\binom{n}{k}$, where $\binom{n}{k} = \frac{n!}{k!(n-k)!}$ and $\binom{n}{k}=0$ for $n <k$. For $N,k \in \mathbb{Z}^+$, $\binom{[N]}{k}$ denotes the set of all $k-$sized subsets of $[N]$. For any set $\mathcal{S}$, the cardinality of $\mathcal{S}$ is denoted by $|\mathcal{S}|$. The complement of set $\mathcal{S}$ is denoted by $\mathcal{S}^c$. The empty set is denoted by $\varnothing$. 
The vertical concatenation of two matrices $\mathbf{A}_{m_1 \times n}$ and $\mathbf{B}_{m_2 \times n}$ are denoted by $[\mathbf{A};\mathbf{B}]$. The $i$-th row and the $j$-th column of a matrix $\mathbf{A}_{m \times n}$ are denoted by $\mathbf{A}(i,:)$ and $\mathbf{A}(j,:)$, respectively. For a set $\mathcal{S} \subseteq [n]$ and a matrix $\mathbf{A}$ with $n$ columns, $\mathbf{A}_{\mathcal{S}}$ denotes the submatrix of $\mathbf{A}$ formed by the columns indexed by $\mathcal{S}$. 
For any $x \in \mathbb{R}^+$, $\ceil{x}$ denotes the smallest positive integer greater than or equal to $x$, and $\floor{x}$ denotes the largest positive integer less than or equal to $x$. The finite field of $q \geq 2$ elements is denoted by $\mathbb{F}_q$. A vector of length $n$ with all zeros is denoted by $\mathbf{0}_{n \times 1}$. An $n-$length unit vector with a one at the $i$-th position and zeroes elsewhere is denoted by $\mathbf{e}_i$. The support of a vector $\mathbf{x} \in \mathbb{F}_q^n$ is defined as $supp(\mathbf{x})=\{ i \in [n]: x_i \neq 0\}$.

\section{System Model and Problem Formulation}
\label{sec:sysmodel}

\begin{figure}
\begin{center}   
\includegraphics[width=10cm]{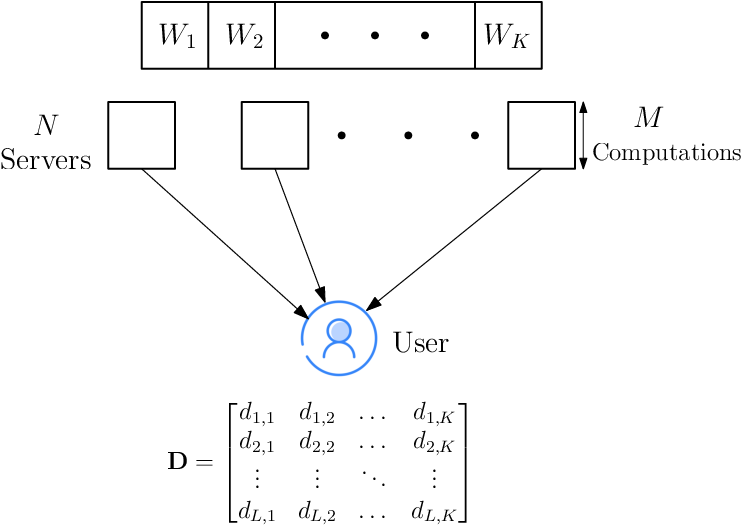}
\caption{System model: \((K,L,M)\) distributed linearly separable function computation.}
\label{SystemModel}
\end{center}
\end{figure}

Consider a distributed computing system with a master node having a library of $K$ i.i.d. datasets $\mathcal{W}=\{W_1,W_2,\ldots,W_K\}$, where $W_k \in \mathbb{F}_q^B$, $\forall k \in [K]$.\footnote{\textcolor{black}{We will for now consider any $q$. As we will see later on, the achievable schemes are applicable for any field, finite or not.  The converse of Theorem~\ref{th:conv} is applicable for any finite field \(\mathbf{F}_q\), while the converse in Theorem~\ref{thmcovering} presented later requires \(q > K\) and also applies over the field of real and complex numbers. The optimality gap of Theorem~\ref{thm:gap} (based on Theorem~\ref{thm1} and Theorem ~\ref{th:conv}), holds for all finite fields. The gap reduces to $2$ or $3$ for any $q\geq eK/M$.}}
In this setting, a single user wishes to compute \(L\) independent functions \(F_1, F_2, \ldots, F_L\) on \(\mathcal{W}\), where
\(
F_\ell : (\mathbb{F}_q^B)^K \rightarrow \mathbb{F}_q^T,~ \forall \ell \in [L]
\)
and \(T\) denotes the size of each function value. Furthermore, each function $F_\ell$ is linearly decomposable as
\begin{align*}
	 F_\ell(\mathcal{W})= d_{\ell,1}f_{1}(W_1)+d_{\ell,2}f_{2}(W_2)+\cdots +d_{\ell,K}f_{K}(W_K), \quad \forall \ell \in [L]
\end{align*}
where $d_{\ell,j} \in \mathbb{F}_q$, and where the subfunction $f_{j}: \mathbb{F}_q^B \rightarrow \mathbb{F}_q^T $, $j \in [K]$,  can be linear or non-linear and can be computationally intensive. Consequently, the $L$ requested functions can be represented as 
\begin{align}
 \begin{bmatrix}
    F_1(\mathcal{W}) \\
    F_2(\mathcal{W}) \\
    \vdots  \\
  F_L(\mathcal{W})\\
 \end{bmatrix} =
\underbrace{ \begin{bmatrix}
 d_{1,1}  &  d_{1,2}  & \dots  & d_{1,K} \\
  d_{2,1}  & d_{2,2}  & \dots  & d_{2,K} \\
  \vdots & \vdots & \ddots & \vdots \\
  d_{L,1} & d_{L,2} & \dots & d_{L,K}
 \end{bmatrix}}_{\mathbf{D}}
 \begin{bmatrix}
   f_1(W_1) \\
   f_2(W_2) \\
   \vdots \\
   f_K(W_K) \\
   \end{bmatrix}
  \label{eq:rep}
\end{align}
where -- under the worst-case assumption -- we consider the case of $\text{rank}_q(\mathbf{D})=L$.
Since the $L$ functions are linearly separable, the computation of the $F_\ell(\mathcal{W})$'s can be performed in a distributed manner across the $N$ servers. We consider the case where the computational capability of the servers is limited, in that each server can compute a maximum of $M \textcolor{black}{\leq} K$ subfunctions. This model is illustrated in Figure~\ref{SystemModel}. 

The system operates in three phases: the \textit{demand phase}, the \textit{computing phase}, and the \textit{communication phase}.

\textit{Demand phase}: In the demand phase, the user informs the master node of its $L$ desired functions, which are naturally described by the matrix $\mathbf{D} \in \mathbb{F}_q^{L \times K}$.

\textit{Computing phase}: Subject to the constraint that each server computes at most $M$ subfunctions, the master uses the demand matrix $\mathbf{D}$ to assign to server $n$ a subset $\mathcal{M}_n \subset [K]$ of subfunctions and the associated datasets. At the end of this phase, server $n$ evaluates $f_j(W_j) \in \mathbb{F}_q^T$ for all $j \in \mathcal{M}_n$. We often refer to the collection \(\mathcal{M} = \{\mathcal{M}_n : n \in [N]\}\) as the \emph{task assignment}.

\textit{Communication phase}: After computing, server $n$ sends $r_n$ linearly encoded messages based on its local outputs $f_j(W_j)$, where each message is of the form\footnote{We adopt the standard no-subpacketization assumption, where servers do not divide subfunction outputs into smaller parts.
}
\begin{equation} \label{eq:xnr} x_{n,r} = \sum_{j \in \mathcal{M}_n}\alpha_{n,r,j} f_j(W_j) \end{equation}  for $r \in [r_n]$, with coefficients $\alpha_{n,r,j} \in \mathbb{F}_q$. This means that each server $n$ transmits $ \mathbf{x}_n= [x_{n,1}, x_{n,2}, \ldots, x_{n,r_n}]^{\intercal}$, where 
\begin{align}
 \begin{bmatrix}
   x_{n,1} \\
   x_{n,2}\\
    \vdots  \\
  x_{n,r_n}\\
 \end{bmatrix} =
\underbrace{ \begin{bmatrix}
 \alpha_{n,1,1}  &  \alpha_{n,1,2}  & \dots  & \alpha_{n,1,K} \\
  \alpha_{n,2,1}  & \alpha_{n,2,2}  & \dots  & \alpha_{n,2,K} \\
  \vdots & \vdots & \ddots & \vdots \\
  \alpha_{n,r_n,1} & \alpha_{n,r_n,2} & \dots & \alpha_{n,r_n,K}
 \end{bmatrix}}_{\mathbf{A}_n \in \mathbb{F}_q^{r_n \times K}}
 \begin{bmatrix}
   f_1(W_1) \\
   f_2(W_2) \\
   \vdots \\
   f_K(W_K) \\
   \end{bmatrix} \ .
   \label{eq:trans1}
\end{align}
Since $x_{n,r}$ is a linear combination of at most $M$ subfunctions, then  $||\mathbf{A}_n(r,:)||_0 \leq M, \forall r \in [r_n] \text{ and } n \in [N]$.
For $\mathbf{x}=[\mathbf{x}_1; \mathbf{x}_2; \ldots; \mathbf{x}_N]$ denoting the entire set of transmissions across all the servers, for $R = \sum_{n \in [N]}r_n$, and for $\mathbf{A}=[\mathbf{A}_1; \mathbf{A}_2; \ldots; \mathbf{A}_N] \in \mathbb{F}_q^{ R \times K}$ denoting the \emph{encoding matrix}, then the entire set of transmissions can be represented as   
$$\mathbf{x}=\mathbf{A}[f_1(W_1), f_2(W_2), \ldots, f_K(W_K)]^{\intercal}$$ where each row of $\mathbf{A}$ has a non-empty support of cardinality at most $M$.

The communication link between each of the $N$ servers and the user is assumed to be error-free and non-interfering. The user is allowed to perform linear operations on the received messages, in order to decode the $L$ function values, and thus this decoding operation can be represented as 
\begin{align}
 \begin{bmatrix}
    F_1(\mathcal{W}) \\
    F_2(\mathcal{W}) \\
    \vdots  \\
  F_L(\mathcal{W})\\
 \end{bmatrix} =
\underbrace{ \begin{bmatrix}
 c_{1,1}  &  c_{1,2}  & \dots  & c_{1,R} \\
  c_{2,1}  & c_{2,2}  & \dots  & c_{2,R} \\
  \vdots & \vdots & \ddots & \vdots \\
  c_{L,1} & c_{L,2} & \dots & c_{L,R}
 \end{bmatrix}}_{\mathbf{C} \in \mathbb{F}_q^{L \times R}}
 \underbrace{\begin{bmatrix}
   \mathbf{x}_1 \\
  \mathbf{x}_2 \\
   \vdots \\
  \mathbf{x}_N  \\
   \end{bmatrix}}_{\mathbf{x} \in \mathbb{F}_q^{R \times 1}} = \mathbf{C} \mathbf{A}
   \begin{bmatrix}
   f_1(W_1) \\
   f_2(W_2) \\
   \vdots \\
   f_K(W_K) \\
   \end{bmatrix}
   \label{eq:dec}
\end{align}
where the matrix $\mathbf{C}$ is composed of the combining coefficients. From \eqref{eq:rep} and \eqref{eq:dec}, and from the fact that the scheme must hold independent of the dataset realizations and subfunction outputs, 
it is evident that the demand matrix $ \mathbf{D} \in \mathbb{F}_q^{L \times K}$ must be decomposed as
\begin{align}
\label{eq:TransDCA}
 \mathbf{D}=\mathbf{C}\mathbf{A}.
\end{align}
Because of this equivalence between the distributed computing problem and matrix decomposition, our aim becomes that of designing two matrices $\mathbf{C} \in \mathbb{F}_q^{L \times R  } $ and $\mathbf{A} \in \mathbb{F}_q^{R \times K  }$, where $R \geq L$, that factorize $\mathbf{D}$ subject to an $M$-sparsity constraint on every row of $\mathbf{A}$.

The communication cost here represents the total number of transmissions, from the servers to the user, required for recovery of the $L$ desired function outputs at the user. 
In particular, for a given $\mathbf{D} \in \mathbb{F}_q^{L \times K}$, the communication cost takes the form
\begin{equation*}
 R_{\mathbf{D}}(M) = \sum_{n \in [N]}r_n 
\end{equation*}
where we assume unit-length file messages, i.e., we assume $T = |x_{n,r}| = 1, \ n \in [N],r \in [r_n]$.\footnote{A subfunction output/coded transmission comprising $T$ symbols is treated as one unit.} 
The rate of interest in our work will represent the worst-case communication cost 
$$R(K,L,M) = \max_{\mathbf{D} \in \mathbb{F}_q^{L \times K}} R_{\mathbf{D}}(M) $$
over all full-rank matrices $\mathbf{D}$, and thus our interest is in identifying the optimal rate
\begin{equation*}
  R^{*}(K,L,M) = \inf\{R(K,L,M): R(K,L,M) \text{ is achievable} \}
\end{equation*}
where the infimum is over all task assignments and linear transmission and decoding policies. Our objective is to design the assignment and communication scheme that approaches this optimum.

\section{Achievability}
\label{sec:mainres}
In this section, we present two achievable schemes for the distributed linearly separable function computation problem. The first scheme (Theorem~\ref{thm1}) applies to all values of \(M,K\), while the second scheme (Theorem~\ref{thm2}) is designed to improve the performance of the first scheme in the region of $M\geq K/2$. Additionally, this second scheme enjoys a demand-agnostic task assignment. We remind the reader that the parameter range of interest is naturally that for which both $M$ and $L$ remain less than $K$.

\subsection{Scheme 1: A General Achievability Scheme}
\begin{thm}
    \label{thm1}
    For the $(K,L,M)$ distributed linearly separable function computation problem, the rate
    \begin{equation}
    \label{eq:thm1}
        R_1(K,L,M) = \min\left\{K,L\left\lceil{\frac{K}{L+M-1}}\right\rceil\right\}
    \end{equation}
    is achievable. 
\end{thm}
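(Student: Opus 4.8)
The plan is to work entirely through the factorization reformulation \eqref{eq:TransDCA}: it suffices to exhibit, for every demand matrix $\mathbf{D}\in\mathbb{F}_q^{L\times K}$ with $\operatorname{rank}_q(\mathbf{D})=L$, a pair $\mathbf{C},\mathbf{A}$ with $\mathbf{D}=\mathbf{C}\mathbf{A}$ such that every row of $\mathbf{A}$ is $M$-sparse and $\mathbf{A}$ has at most $R_1(K,L,M)$ rows (the construction may depend on $\mathbf{D}$, since the demand phase precedes the computing phase). Since $R_1$ is a minimum of two quantities, I would handle the two terms separately. The bound $R=K$ is immediate: take $\mathbf{A}=\mathbf{I}_K$, whose rows are $1$-sparse hence $M$-sparse, and $\mathbf{C}=\mathbf{D}$. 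All the substance lies in the term $L\lceil K/(L+M-1)\rceil$.

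For that term, write $s:=L+M-1$ and partition the $K$ columns into $P=\lceil K/s\rceil$ blocks $B_1,\dots,B_P$ of sizes $k_i\le s$. The engine is a per-block construction: with $\mathbf{D}_{B_i}$ the column-submatrix on $B_i$ and $W_i:=\operatorname{rowspace}(\mathbf{D}_{B_i})\subseteq\mathbb{F}_q^{k_i}$, I would produce at most $L$ vectors in $\mathbb{F}_q^{k_i}$, each $M$-sparse, whose span $U_i$ contains $W_i$. Zero-padding these block-local vectors outside $B_i$ and stacking them gives a block-diagonal $\mathbf{A}$ with $\operatorname{rowspace}(\mathbf{A})=\bigoplus_i U_i\supseteq\bigoplus_i W_i$. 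Since any row $\mathbf{d}$ of $\mathbf{D}$ decomposes as $\mathbf{d}=\sum_i \mathbf{d}|_{B_i}$ with $\mathbf{d}|_{B_i}\in W_i$, we get $\operatorname{rowspace}(\mathbf{D})\subseteq\operatorname{rowspace}(\mathbf{A})$, so each row of $\mathbf{D}$ is a combination of rows of $\mathbf{A}$ and the decoding matrix $\mathbf{C}$ with $\mathbf{D}=\mathbf{C}\mathbf{A}$ exists. The row count is $\sum_i\dim U_i\le LP=L\lceil K/s\rceil$, which together with the $R=K$ bound yields $R\le\min\{K,\,L\lceil K/(L+M-1)\rceil\}$.

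The crux is the per-block claim, and the decisive (and least obvious) idea is to span a carefully enlarged \emph{superspace} of $W_i$ rather than $W_i$ itself. I would set $d_i:=\max\{\dim W_i,\;k_i-M+1\}$. Because $k_i\le s=L+M-1$ we have $k_i-M+1\le L$, and since $\dim W_i\le\operatorname{rank}_q(\mathbf{D})=L$ we get $d_i\le L$; also $d_i\le k_i$. I would choose any $U_i\supseteq W_i$ with $\dim U_i=d_i$. As $d_i\ge k_i-M+1$, the reduced row echelon basis of $U_i$ has $d_i$ pivot columns and at most $k_i-d_i\le M-1$ free columns, so each basis vector has at most $1+(k_i-d_i)\le M$ nonzeros and is therefore $M$-sparse. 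This furnishes $d_i\le L$ $M$-sparse generators of a space containing $W_i$, proving the block claim; in the full-rank case this is exactly the left-nullspace design underlying Lemma~\ref{lemma1}, each generator being $\mathbf{c}^{\intercal}\mathbf{D}_{B_i}$ for $\mathbf{c}$ in the left nullspace of a selected $(k_i{-}M)$-column submatrix.

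The main obstacle I anticipate is precisely this per-block step for columns that are rank-deficient yet have large support (for instance a block of mutually proportional columns): there $W_i$ may contain \emph{no} nonzero $M$-sparse vector, so a direct sparse basis of $W_i$ — or a naive echelon/nullspace computation on $\mathbf{D}_{B_i}$ — simply fails. Enlarging $W_i$ up to dimension $k_i-M+1$ before extracting a sparse basis is what repairs this, and the delicate point is verifying that the enlargement never forces the per-block count above $L$, which is exactly where the block-size budget $k_i\le L+M-1$ is used. I would then dispatch the easy edge cases — blocks with $k_i\le M$, where every vector is automatically $M$-sparse, and the final remainder block — and assemble the overall bound $R\le\min\{K,\,L\lceil K/(L+M-1)\rceil\}$.
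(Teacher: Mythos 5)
Your proposal is correct, and its outer structure coincides with the paper's proof of Theorem~\ref{thm1}: the trivial rate-$K$ factorization, the column-wise partition into $\lceil K/(L+M-1)\rceil$ blocks, and a per-block budget of $L$ transmissions are exactly the paper's Case~2 reduction to Case~1 ($K\le L+M-1$). Where you genuinely diverge is the per-block engine. The paper (Lemma~\ref{lemma1}) fixes an invertible $L\times L$ submatrix $\mathbf{D}_{\mathcal{L}}$, assigns server $\ell$ the tasks $\mathcal{M}^\star_\ell=\{i_\ell\}\cup([K]\setminus\mathcal{L})$, and proves via a contradiction argument that the stacked left-nullspace matrix $\mathbf{N}_{\mathbf{D},\mathcal{M}^\star}$ of \eqref{eq:NullMat} has rank $L$; rank-deficient (sub-)demand matrices are handled by replacing dependent rows so as to restore full row rank, i.e., by augmenting the rowspace to dimension exactly $L$. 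You instead enlarge the block rowspace $W_i$ only to dimension $d_i=\max\{\dim W_i,\;k_i-M+1\}$ and extract a reduced-row-echelon basis, with pivot counting delivering $M$-sparsity and linear independence in one stroke. The two mechanisms produce essentially the same object -- the paper's vectors $\mathbf{n}_\ell\mathbf{D}$ are precisely an echelon-type basis with pivots in $\mathcal{L}$ and the $M-1$ columns of $[K]\setminus\mathcal{L}$ free -- but your packaging buys a shorter, contradiction-free verification, a uniform treatment of rank-deficient blocks (your anticipated obstacle, which the paper resolves by the row-replacement trick) and of the remainder block, and possibly fewer transmissions per block when $\dim W_i<L$, though the worst-case rate is unchanged. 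What the paper's packaging buys in exchange is operational content: it makes the task assignment explicit at the server level and yields the feasibility test \eqref{eq:NullMat}, which is reused verbatim in Scheme~2 (Theorem~\ref{thm2}) and in the rate-versus-servers tradeoff (Theorem~\ref{thm6}); your abstract factorization must be translated back into assignments (take $\mathcal{M}_n=supp(\mathbf{A}(n,:))$, padded up to size $M$), which is immediate but worth stating, since the equivalence of achievability with the sparse factorization $\mathbf{D}=\mathbf{C}\mathbf{A}$ in \eqref{eq:TransDCA} is what licenses your entire argument.
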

\textit{Proof.} The proof of the achievability of the rate expression in \eqref{eq:thm1} is divided into two cases. Case~1 for \(K\leq L+M-1\), and Case~2 for \(K > L+M-1\). The crux of the proof of Case~1 lies in Lemma~\ref{lemma1} presented next, while for Case~2, we combine the idea presented in Lemma~\ref{lemma1} with a demand matrix partitioning technique. 
\subsubsection{\textbf{Case 1 } (\(K\leq L+M-1\))}
\label{Case1}~\\
\textcolor{black}{We first consider the case \(K \leq L + M - 1\). When doing so, it is sufficient to focus on the scenario \(K = L + M - 1\), because if \(K < L + M - 1\), each server uses only \(M' = K - L + 1 < M\) of its computing capability, reducing the problem to the case \(K = L + M' - 1\).} Consider a demand matrix \(\mathbf{D}\in\mathbb{F}_q^{L\times K}\), where \(\mathbf{D} =[\mathbf{d}_1,\mathbf{d}_2,\dots,\mathbf{d}_K]\). Recall that \(\mathcal{M}_n\subset [K]\) is the set of indices of datasets known to server \(n\), where \(|\mathcal{M}_n|=M = K-L+1\) for every \(n\in [N]\). We now define a submatrix  
\begin{equation*}
    \mathbf{D}_{\mathcal{M}_n} = [\mathbf{d}_j:j\in \mathcal{M}_n],\quad n\in [N]
\end{equation*}
of the demand matrix \(\mathbf{D}\) by choosing the columns indexed with the set \(\mathcal{M}_n\). Consider the following matrix 
\begin{equation}
\label{eq:NullMat}
    \mathbf{N}_{\mathbf{D},\mathcal{M}} = [\mathcal{N}(\mathbf{D}_{\mathcal{M}_1^c}^\intercal),\mathcal{N}(\mathbf{D}_{\mathcal{M}_2^c}^\intercal),\dots,\mathcal{N}(\mathbf{D}_{\mathcal{M}_N^c}^\intercal)]^\intercal
\end{equation}
where \(\mathbf{D}_{\mathcal{M}_n^c}\) denotes the matrix obtained by choosing the columns of \(\mathbf{D}\) indexed with the set \([K]\backslash\mathcal{M}_n\) and where \(\mathcal{N}(.)\) is the nullspace operator, which outputs a set of basis vectors of the nullspace of the matrix on which the operator acts. For any \(\mathcal{M}_n\) with \(|\mathcal{M}_n|=M=K-L+1\), the submatrix \(\mathbf{D}_{\mathcal{M}_n^c}\) is of size \(L\times (L-1)\), and therefore \(\mathbf{D}_{\mathcal{M}_n^c}\) has a non-trivial left nullspace. Consequently, the rank of the left nullspace of \(\mathbf{D}_{\mathcal{M}_n^c}\) is at least one, and \(\mathcal{N}(\mathbf{D}_{\mathcal{M}_n^c}^\intercal)\) contains at least one vector for every \(n\in [N]\). Thus, the number of rows in the matrix \(\mathbf{N}_{\mathbf{D},\mathcal{M}}\) is at least \(N\). We now have the following lemma. 

\begin{lem}
\label{lemma1}
    For a given \(\mathbf{D}\in \mathbb{F}_q^{L\times K}\) with \(\text{rank}_q(\mathbf{D})= L\), and \(\mathcal{M}=\left\{\mathcal{M}_1,\mathcal{M}_2,\dots,\mathcal{M}_N\right\}\)
    , the rate \(L\) is achievable if 
    \begin{equation}
    \label{eq:nullspace}
        \text{rank}_q(\mathbf{N}_{\mathbf{D},\mathcal{M}})=L.
    \end{equation}
    Furthermore, for any given \(K,L\), and \(M\) with \(K=L+M-1\), and for any \(\mathbf{D}\in \mathbb{F}_q^{L\times K}\), the rate 
    \begin{equation*}
        R_1(K,L,M)=L
    \end{equation*}
    is achievable. 
\end{lem}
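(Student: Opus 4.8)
The plan is to prove the two assertions in order: first the sufficient condition \eqref{eq:nullspace}, and then the universal achievability of rate $L$ for every full-rank $\mathbf{D}$ when $K=L+M-1$. The whole argument rests on one structural observation about the rows of $\mathbf{N}_{\mathbf{D},\mathcal{M}}$, which I would establish at the outset. Each row $\mathbf{v}^\intercal$ of $\mathbf{N}_{\mathbf{D},\mathcal{M}}$ originating from server $n$ lies in $\mathcal{N}(\mathbf{D}_{\mathcal{M}_n^c}^\intercal)$, i.e.\ $\mathbf{v}^\intercal\mathbf{D}_{\mathcal{M}_n^c}=\mathbf{0}$. Hence $\mathbf{v}^\intercal\mathbf{D}$ vanishes on every column indexed by $\mathcal{M}_n^c$ and is supported on $\mathcal{M}_n$, so it is $M$-sparse and is exactly a legitimate single transmission of server $n$ in the sense of \eqref{eq:xnr}. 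This means $\mathbf{A}=\mathbf{N}_{\mathbf{D},\mathcal{M}}\mathbf{D}$ is a valid encoding matrix compatible with the assignment $\mathcal{M}$: every row obeys the per-server $M$-sparsity budget.

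For the sufficient condition, recall from \eqref{eq:TransDCA} that decodability means $\mathbf{D}=\mathbf{C}\mathbf{A}=\mathbf{C}\,\mathbf{N}_{\mathbf{D},\mathcal{M}}\,\mathbf{D}$. Since $\text{rank}_q(\mathbf{D})=L$, the matrix $\mathbf{D}\in\mathbb{F}_q^{L\times K}$ has full row rank and hence a right inverse, so $(\mathbf{C}\,\mathbf{N}_{\mathbf{D},\mathcal{M}}-\mathbf{I}_L)\mathbf{D}=\mathbf{0}$ is equivalent to $\mathbf{C}\,\mathbf{N}_{\mathbf{D},\mathcal{M}}=\mathbf{I}_L$. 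Such a left inverse $\mathbf{C}$ exists precisely when $\mathbf{N}_{\mathbf{D},\mathcal{M}}$ has full column rank $L$, which is exactly hypothesis \eqref{eq:nullspace}. To pin the rate to exactly $L$ rather than the (possibly larger) number of rows of $\mathbf{N}_{\mathbf{D},\mathcal{M}}$, I would, under \eqref{eq:nullspace}, extract $L$ linearly independent rows forming an invertible $\mathbf{N}'\in\mathbb{F}_q^{L\times L}$; the $L$ transmissions they generate give $\mathbf{A}=\mathbf{N}'\mathbf{D}$ (so $R=L$, each row $M$-sparse and assignable to its originating server) and $\mathbf{C}=(\mathbf{N}')^{-1}$, whence $\mathbf{C}\mathbf{A}=\mathbf{D}$.

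For the universal statement with $K=L+M-1$ (so $|\mathcal{M}_n^c|=L-1$), it remains to exhibit, for an arbitrary full-rank $\mathbf{D}$, one assignment $\mathcal{M}$ satisfying \eqref{eq:nullspace}. Because $\text{rank}_q(\mathbf{D})=L$, I would pick $L$ linearly independent columns indexed by $\mathcal{C}=\{c_1,\dots,c_L\}$ and set $\mathcal{M}_i^c=\mathcal{C}\setminus\{c_i\}$ for $i\in[L]$ (using $N=L$ servers). Each $\mathbf{D}_{\mathcal{M}_i^c}$ is an $L\times(L-1)$ matrix whose columns form a subset of an independent set, hence has rank exactly $L-1$ and a one-dimensional left nullspace spanned by some $\mathbf{v}_i$. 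Writing $\mathbf{V}=[\mathbf{v}_1^\intercal;\dots;\mathbf{v}_L^\intercal]=\mathbf{N}_{\mathbf{D},\mathcal{M}}$, I would analyze the product $\mathbf{V}\mathbf{D}_{\mathcal{C}}$: its $(i,j)$ entry is $\mathbf{v}_i^\intercal\mathbf{d}_{c_j}$, which equals $0$ for $j\neq i$ (by construction of $\mathbf{v}_i$) and is nonzero for $j=i$ (otherwise $\mathbf{v}_i$ would annihilate all columns of the invertible $\mathbf{D}_{\mathcal{C}}$, forcing $\mathbf{v}_i=\mathbf{0}$). Thus $\mathbf{V}\mathbf{D}_{\mathcal{C}}$ is diagonal with nonzero diagonal and is invertible; since $\mathbf{D}_{\mathcal{C}}$ is invertible, so is $\mathbf{V}$, giving $\text{rank}_q(\mathbf{N}_{\mathbf{D},\mathcal{M}})=L$. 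Invoking the first part then yields $R_1(K,L,M)=L$ for this arbitrary $\mathbf{D}$.

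I expect the main obstacle to be this last step — guaranteeing full rank of the nullspace matrix for \emph{every} demand, not merely a generic one. The delicate point is degeneracy: if a chosen complement $\mathcal{M}_i^c$ indexed a rank-deficient submatrix, the associated $\mathbf{v}_i$ would be non-unique and the independence argument would break. Drawing all complements from a single independent column set $\mathcal{C}$ is precisely what forces each $\mathbf{D}_{\mathcal{M}_i^c}$ to have rank exactly $L-1$, after which the diagonal-invertibility computation settles independence cleanly. A useful feature worth highlighting is that this construction selects independent columns and solves linear systems only over $\mathbb{F}_q$, using no genericity or field-size assumption, which is what makes the rate-$L$ achievability hold over any field.
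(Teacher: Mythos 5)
Your proposal is correct and follows essentially the same route as the paper: the same task assignment $\mathcal{M}_i=\{c_i\}\cup\left([K]\backslash\mathcal{C}\right)$ built from an invertible $L\times L$ submatrix, the same observation that rows of $\mathbf{N}_{\mathbf{D},\mathcal{M}}$ yield $M$-sparse transmissions decodable via an invertible $L\times L$ row selection, and your direct computation that $\mathbf{V}\mathbf{D}_{\mathcal{C}}$ is diagonal with nonzero diagonal is just a streamlined restatement of the paper's contradiction argument, whose key identity $\mathbf{N}_{\mathbf{D},\mathcal{M}^\star}(\ell',:)\,\mathbf{D}_\mathcal{L}=\beta_{\ell'}\mathbf{e}_{\ell'}^\intercal$ is exactly your diagonal entry. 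The only detail you skip is the lemma's ``for any $\mathbf{D}$'' clause in the second assertion, which the paper dispatches in one line by replacing linearly dependent rows of a rank-deficient $\mathbf{D}$ so as to reduce to the full-rank case you treat.
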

\begin{proof}[Proof of Lemma~\ref{lemma1}]
    Consider a demand matrix \(\mathbf{D}\in \mathbb{F}_q^{L\times K}\), where  \(\text{rank}_q(\mathbf{D})=L\). Furthermore, assume that the task assignment \(\mathcal{M}=\left\{\mathcal{M}_1,\mathcal{M}_2,\dots,\mathcal{M}_N\right\}\) is such that the condition
        \begin{equation*}
        \text{rank}_q(\mathbf{N}_{\mathbf{D},\mathcal{M}})=L
    \end{equation*}
    holds. Then, the matrix \(\mathbf{N}_{\mathbf{D},\mathcal{M}}\) has at least a set of \(L\) independent rows. Let \(\tilde{\mathbf{N}}_{\mathbf{D},\mathcal{M}}\) be an \(L\times L\) matrix constructed by choosing a set of \(L\) independent rows from \({\mathbf{N}}_{\mathbf{D},\mathcal{M}}\). In the communication phase, there is a transmission corresponding to each row in \(\tilde{\mathbf{N}}_{\mathbf{D},\mathcal{M}}\). Let the \(\ell\)-th row of \(\tilde{\mathbf{N}}_{\mathbf{D},\mathcal{M}}\), denoted by \(\tilde{\mathbf{N}}_{\mathbf{D},\mathcal{M}}(\ell,:)\), corresponds to the task assignment \(\mathcal{M}_n\in \mathcal{M}\) for some \(n\in [N]\). Then, server \(n\) makes the following transmission 
    \begin{equation}
    \label{eq:trans}
         \tilde{\mathbf{N}}_{\mathbf{D},\mathcal{M}}(\ell,:)\mathbf{D}[f_1(W_1),f_2(W_2),\dots,f_K(W_K)]^\intercal.
    \end{equation}
    Now, it remains to show the following two claims.
     \begin{itemize}
         \item \textit{Claim 1:} Server \(n\) can construct the transmission in \eqref{eq:trans} from the available datasets in \(\mathcal{M}_n\). 
         \item \textit{Claim 2:} The user can decode the \(L\) required functions using the \(L\)  transmissions received from the servers.
     \end{itemize}
     \textit{Claim 1} essentially means that the message in \eqref{eq:trans} is a linear combination of the subfunctions \(\{f_j(W_j):j\in \mathcal{M}_n\}\). In other words, we need to show that
     \begin{equation}
     \label{eq:claim1}
     supp (\tilde{\mathbf{N}}_{\mathbf{D},\mathcal{M}}(\ell,:)\mathbf{D})\subseteq \mathcal{M}_n.
     \end{equation}
     However notice that 
     \begin{equation*}
         \tilde{\mathbf{N}}_{\mathbf{D},\mathcal{M}}(\ell,:) \in \mathcal{N}(\mathbf{D}_{\mathcal{M}_n^c}^\intercal)
     \end{equation*}
     and thus that 
        \begin{equation*}
         \tilde{\mathbf{N}}_{\mathbf{D},\mathcal{M}}(\ell,:) \mathbf{D}_{\mathcal{M}_n^c} = \mathbf{0}_{1\times K-M}
     \end{equation*}
     which implies that \eqref{eq:claim1} is true, and thus Claim 1  follows.

     From \eqref{eq:trans}, the \(L\) transmitted messages can be written in matrix form as
     \begin{equation}
         \label{eq:Ltrans}
         \mathbf{x}_{L\times 1} =\tilde{\mathbf{N}}_{\mathbf{D},\mathcal{M}} \  \mathbf{D}[f_1(W_1),f_2(W_2),\dots,f_K(W_K)]^\intercal.
     \end{equation}
    Since \(\text{rank}_q(\tilde{\mathbf{N}}_{\mathbf{D},\mathcal{M}})=L\), the matrix \(\tilde{\mathbf{N}}_{\mathbf{D},\mathcal{M}}\) is invertible, and thus the user can decode the requested functions from \(\mathbf{x}\) as
          \begin{equation}
         \label{eq:Decode}
         \mathbf{D}[f_1(W_1),f_2(W_2),\dots,f_K(W_K)]^\intercal= (\tilde{\mathbf{N}}_{\mathbf{D},\mathcal{M}})^{-1}\mathbf{x}_{L\times 1}.
     \end{equation}

     To show the achievability of \(R_1(K,L,M)=L\) for any given \(\mathbf{D}\in \mathbb{F}_q^{L\times K}\) with \(K=L+M-1\), we give an explicit construction of \(\mathcal{M}=\{\mathcal{M}_1,\mathcal{M}_2,\dots,\mathcal{M}_L\}\) satisfying the condition \(\text{rank}_q(\mathbf{N}_{\mathbf{D},\mathcal{M}})=L\). Note that here we have $N=L$. Without loss of generality, we assume that \(\text{rank}_q(\mathbf{D})=L\). If this is not the case, we can replace some \(L-\text{rank}_q(\mathbf{D})\) linearly dependent rows of \(\mathbf{D}\) with a different set of \(L-\text{rank}_q(\mathbf{D})\) vectors, so that the resulting matrix has full row rank. If the user can decode the new functions, the removed \(L-\text{rank}_q(\mathbf{D})\) functions can still be recovered as linear combinations of the original \(\text{rank}_q(\mathbf{D})\) functions that were retained. From the full-row-rank matrix \(\mathbf{D}\), we find an invertible submatrix \(\mathbf{D}_\mathcal{L}\) of size \(L\times L\), where \(\mathcal{L}= \{i_1,i_2,\dots,i_L\}\) denotes the index set of columns in \(\mathbf{D}\) chosen to form \(\mathbf{D}_\mathcal{L}\). Then, we form the task assignment set
     \begin{equation}
     \label{eq:dataset1}
         \mathcal{M}^\star = \{\mathcal{M}_1^\star,\mathcal{M}_2^\star,\dots,\mathcal{M}_L^\star\} 
     \end{equation}
     where, for every \(\ell\in [L]\)
     \begin{equation}
         \label{eq:datasetassign}
         \mathcal{M}_\ell^\star = \left\{i_\ell\right\} \bigcup \left(\left[K\right]\backslash \mathcal{L}\right).
     \end{equation}
     Note that, \(|\mathcal{M}_\ell^\star|=1+K-L=M\) for every \(\ell\in [L]\). To complete the proof of achievability of the rate \(R_1(K,L,M)=L\), it remains to verify that \(\text{rank}_q(\mathbf{N}_{\mathbf{D},\mathcal{M}^\star})=L\). Suppose  \(\text{rank}_q(\mathbf{N}_{\mathbf{D},\mathcal{M}^\star})<L\), then there exists a row \(\mathbf{N}_{\mathbf{D},\mathcal{M}^\star}(\ell',:)\) of \(\mathbf{N}_{\mathbf{D},\mathcal{M}^\star}\) which can be represented as 
     \begin{equation}
     \label{eq:contra0}
         \mathbf{N}_{\mathbf{D},\mathcal{M}^\star}(\ell',:) = \sum_{\ell=1,\ell\neq \ell'}^L \alpha_\ell \mathbf{N}_{\mathbf{D},\mathcal{M}^\star}(\ell,:)
     \end{equation}
     where \(\alpha_\ell\neq 0\) for some \(\ell\). On one hand, \(\mathbf{N}_{\mathbf{D},\mathcal{M}^\star}(\ell',:)\) is a vector in the left nullspace of \(\mathbf{D}_{\mathcal{M}_{\ell'}^c} = \mathbf{D}_{\mathcal{L}\backslash \{i_{\ell'}\}}\), and thus
     \begin{equation}
         \label{eq:contra1}
          \mathbf{N}_{\mathbf{D},\mathcal{M}^\star}(\ell',:)\mathbf{D}_\mathcal{L} = \beta_{\ell'}\mathbf{e}_{\ell'}^\intercal
     \end{equation}
     where \(\beta_{\ell'}\in \mathbb{F}_q\backslash \{0\}\) and \(\mathbf{e}_{\ell'}\in \mathbb{F}_q^{L\times 1}\) is an \(L\)-length vector with a \(1\) in the \(\ell'\)-th position and \(0\)-s elsewhere. 
     On the other hand, from \eqref{eq:contra0}, we have
     \begin{align}
         \mathbf{N}_{\mathbf{D},\mathcal{M}^\star}(\ell',:)\mathbf{D}_\mathcal{L} &= \left(\sum_{\ell=1,\ell\neq \ell'}^L \alpha_\ell \mathbf{N}_{\mathbf{D},\mathcal{M}^\star}(\ell,:)\right)\mathbf{D}_\mathcal{L}\notag\\
         & = \sum_{\ell=1,\ell\neq \ell'}^L \alpha_\ell \bigg(\mathbf{N}_{\mathbf{D},\mathcal{M}^\star}(\ell,:)\mathbf{D}_\mathcal{L}\bigg)\notag\\
         & = \sum_{\ell=1,\ell\neq \ell'}^L \alpha_\ell\left(\beta_\ell\mathbf{e}_\ell^\intercal\right)\label{eq:contra2}
     \end{align}
     where \(\beta_\ell \in \mathbb{F}_q\backslash \{0\}\), for every \(\ell\in [L]\backslash \{\ell'\}\). Clearly, \eqref{eq:contra1} and \eqref{eq:contra2} contradict. Therefore, \(\mathbf{N}_{\mathbf{D},\mathcal{M}^\star}\) cannot have rank less than \(L\). In other words, \(\text{rank}_q(\mathbf{N}_{\mathbf{D},\mathcal{M}^\star} )=L\), and therefore, for any given \(K,L\), and \(M\) with \(K=L+M-1\), and for any \(\mathbf{D}\in \mathbb{F}_q^{L\times K}\), the rate 
     \begin{equation*}
        R_1(K,L,M)=L
    \end{equation*}
    is achievable. This completes the proof of Lemma~\ref{lemma1}.
\end{proof}



We continue with the proof of Theorem~\ref{thm1}, by proceeding to Case 2. Before doing so, we provide a simple example of Scheme 1 -- as it pertains to Case 1 -- and demonstrate its achievable rate.

\begin{example}[Scheme 1, Case 1]
Consider the \((K=4,L=3,M=2)\) distributed linearly separable function computation problem, where the demand matrix is
    \begin{equation}
    \label{eq:ex1D}
\mathbf{D}
=
\begin{bmatrix}
    1 & 1 & 1 & 1 \\
    1 & 2 & 3 & 2 \\
    1 & 4 & 1 & 2
\end{bmatrix}.
\end{equation}
We now design the task assignment and the server transmissions based on Lemma~\ref{lemma1}. The number of servers required to employ the task assignment in Lemma~\ref{lemma1} (cf. \eqref{eq:nullspace}) is \(N=L=3\).
Note that the first three columns of \(\mathbf{D}\) are linearly independent. Thus, following the nullspace-based approach outlined in \eqref{eq:nullspace}, we get
\begin{align*}
    \mathcal{M}_1 =\{1,4\},\quad
    \mathcal{M}_2 =\{2,4\},\quad
    \mathcal{M}_3 =\{3,4\}.
\end{align*}
Note that the first server does not have access to \(W_2\) and \(W_3\). Now, we find a vector that resides in the left nullspace of the submatrix \( \mathbf{D}_{\mathcal{M}_1^c}=\mathbf{D}_{\{2,3\}}\). For instance, \([10, -3, -1 ]\) is a vector in the left nullspace of \(\mathbf{D}_{\{2,3\}}\). Then, the transmission made by the first server is
\begin{align}
    \mathbf{x}_1 &= [10, -3, -1][F_1(\mathcal{W}),F_2(\mathcal{W}),F_3(\mathcal{W})]^\intercal\notag\\
    &=6f_1(W_1)+2f_4(W_4).\label{eq:ex1x1}
\end{align}
Similarly, the vectors \([-1, 0, 1]\) and \([-2, 3, -1]\) are in the left nullspaces of \(\mathbf{D}_{\mathcal{M}_2^c} = \mathbf{D}_{\{1,3\}}\) and \( \mathbf{D}_{\mathcal{M}_3^c} = \mathbf{D}_{\{1,2\}}\), respectively. Therefore, the transmissions from server 2 and server 3 are
\begin{align}
    \mathbf{x}_2 &= [-1, 0, 1][F_1(\mathcal{W}),F_2(\mathcal{W}),F_3(\mathcal{W})]^\intercal\notag\\
    &=3f_2(W_2)+f_4(W_4)\label{eq:ex1x2}
\end{align}
and
\begin{align}
    \mathbf{x}_3 &= [-2, 3, -1][F_1(\mathcal{W}),F_2(\mathcal{W}),F_3(\mathcal{W})]^\intercal\notag\\
    &=6f_3(W_3)+2f_4(W_4) \label{eq:ex1x3}
\end{align}
respectively. Since the matrix constructed using \eqref{eq:NullMat} 
    \[
\mathbf{N}_{\mathbf{D},\mathcal{M}}
=
\begin{bmatrix}
    10 & -3 & -1  \\
    -1 & 0 & 1 \\
    -2 & 3 & -1
\end{bmatrix}
\]
is invertible, the user can decode \(F_\ell(\mathcal{W})\) for \(\ell=1,2,3\), from \(\mathbf{x}_1,\mathbf{x}_2\) and \(\mathbf{x}_3\). Therefore, from \eqref{eq:ex1x1}, \eqref{eq:ex1x2}, and \eqref{eq:ex1x3}, the demanded functions in \eqref{eq:ex1D} can be decoded as 
\begin{align*}
F_1(\mathcal{W}) &=\frac{1}{6}(\mathbf{x}_1+2\mathbf{x}_2+\mathbf{x}_3),\\ 
F_2(\mathcal{W}) &=\frac{1}{6}(\mathbf{x}_1+4\mathbf{x}_2+3\mathbf{x}_3),\\
F_3(\mathcal{W}) &=\frac{1}{6}(\mathbf{x}_1+8\mathbf{x}_2+\mathbf{x}_3)
\end{align*}
and thus the rate achieved is
\[R_1(K=4,L=3,M=2) = 3.\]
Since \(\text{rank}_q(\mathbf{D})=3\), even a centralized server with \(M=4\), having access to all the datasets, would require \(3\) transmissions to meet the user's requests. Therefore, the optimal rate of the \((K=4,L=3,M=2)\) distributed linearly separable function computation is \[R^*(4,3,2) = R_1(4,3,2)=3.\]
This completes the example. 
\end{example}

We continue with the proof, moving on to Case~2.  The nullspace-based condition in Lemma~\ref{lemma1} essentially ensures the existence of \(L\) linearly independent vectors, each with a Hamming weight (or support size) less than or equal to \(M\), in the rowspace of \(\mathbf{D}\). However, now, for \(K>L+M-1\), for any \(\mathcal{M}_n\subseteq [K]\) with \(|\mathcal{M}_n|\leq M\), the number of columns in the submatrix \(\mathbf{D}_{\mathcal{M}_n^c}\) is greater than or equal to \(K-M\), where \(K-M>L-1\). Subsequently, \(\mathbf{D}_{\mathcal{M}_n^c}\) need not have a non-trivial left nullspace in general. To combat this, we devise two techniques. In the next subsection, we present an appropriate column-wise partition of the demand matrix into submatrices (calling them sub-demand matrices), and use the technique presented in the proof of Lemma~\ref{lemma1} in each of the sub-demand matrices separately. Later in Theorem~\ref{thm2}, we present another technique to build nullspaces by augmenting the rows of the demand matrix.

\subsubsection{\textbf{Case 2} (\(K>L+M-1\))}
\label{Case2}~\\
When \(K>L+M-1\), we partition the demand matrix \(\mathbf{D}\in \mathbb{F}_q^{L\times K}\) into \(\ceil{K/K'}\) sub-demand matrices, for some positive integer \(K'\leq L+M-1\), as follows 
\begin{equation*}
    \mathbf{D} = [\mathbf{D}_1,\mathbf{D}_2,\dots, \mathbf{D}_{\ceil{K/K'}}]
\end{equation*}
where \(\mathbf{D}_\nu\in \mathbb{F}_q^{L\times K'}\) for all \(\nu\in [\ceil{K/K'}-1]\), and where \(\mathbf{D}_{\ceil{K/K'}}\in \mathbb{F}_q^{L\times (K-K'(\ceil{K/K'}-1))}\). Rewriting (\ref{eq:rep}), we have
\begin{subequations}
\label{eq:aligned}
    \begin{align}
\big[F_1(\mathcal{W}),F_2(\mathcal{W})&,\dots,F_L(\mathcal{W})\big]^\intercal = \mathbf{D}\left[f_1(W_1),f_2(W_2),\dots,f_K(W_K)\right]^\intercal\\
    & = \sum_{\nu=1}^{\ceil{K/K'}-1} \mathbf{D}_\nu\left[f_{(\nu-1)K'+1}(W_{(\nu-1)K'+1}),\dots,f_{\nu K'}(W_{\nu K'})\right]^\intercal\notag\\
    &\qquad +\mathbf{D}_{\ceil{K/K'}}\left[f_{(\ceil{K/K'}-1)K'+1}(W_{(\ceil{K/K'}-1)K'+1}),\dots,f_{K}(W_{K})\right]^\intercal\\
    &= \sum_{\nu=1}^{\ceil{K/K'}-1} \left[F_1^{(\nu)}(W_{[(\nu-1)K'+1:\nu K']}),\dots,F_L^{(\nu)}(W_{[(\nu-1)K'+1:\nu K']})\right]^\intercal\notag\\
    &\qquad +\left[F_1^{(\ceil{K/K'})}(W_{[(\ceil{K/K'}-1)K'+1:K]}),\dots,F_L^{(\ceil{K/K'})}(W_{[(\ceil{K/K'}-1)K'+1:K]})\right]^\intercal.
\end{align}
\end{subequations}
Consequently, for every \(\ell\in [L]\), we get
\begin{equation}
\label{eq:Partition}
    F_\ell(\mathcal{W}) = \sum_{\nu=1}^{\ceil{K/K'}} F_\ell^{(\nu)}(W_{[(\nu-1)K'+1: \min(\nu K',K)]})
\end{equation}
\textcolor{black}{where 
\begin{equation}
\label{eq:Fellnu}
    F_\ell^{(\nu)}(W_{[(\nu-1)K'+1: \min(\nu K',K)]}) = \mathbf{D}_\nu\left[f_{(\nu-1)K'+1}(W_{(\nu-1)K'+1}),\dots,f_{\min(\nu K',K)}(W_{\min(\nu K',K)})\right]^\intercal.
\end{equation}
For every \(\ell\in [L]\) and \(\nu\in [\ceil{K/K'}]\), the function \(F_\ell^{(\nu)}(.)\) is linearly separable over datasets \(W_{[(\nu-1)K'+1: \min(\nu K',K)]}\subseteq \mathcal{W}\).} In addition, the supports of these functions (the domain of the functions)  for different values of \(\nu\) are disjoint. Since the sub-demand matrix \(\mathbf{D}_\nu\), \(\nu \in [\ceil{K/K'}]\), satisfies the condition \(K'\leq L+M-1\), the coding scheme described in Lemma~\ref{lemma1} applies, and is employed for \(\mathbf{D}_\nu\). Thus, from Lemma~\ref{lemma1}, the user can retrieve \(F_\ell^{(\nu)}\), for every \(\ell\in [L]\), with a communication cost \(L\) using a set of \(L\) servers. 
This coding strategy is applied separately for every \(\nu \in [\ceil{K/K'}]\) using distinct sets of \(L\) servers. Consequently, using \eqref{eq:Partition}, the user can retrieve the demanded functions \(\{F_\ell(\mathcal{W}):\ell\in [L]\}\). Therefore, the rate 
\begin{equation}
\label{eq:KbyK'}
    R_1(K,L,M) = L\ceil[\Big]{\frac{K}{K'}}
\end{equation}
is achievable with \(N = L\ceil[\Big]{\frac{K}{K'}}\) servers. 
To minimize the rate expression in \eqref{eq:KbyK'}, we can choose the largest possible value of \(K'\), which is \(L+M-1\). Therefore, we get the required rate expression
\begin{equation}
\label{eq:main}
    R_1(K,L,M) = L\ceil[\Big]{\frac{K}{L+M-1}}.
\end{equation}
This completes the proof of Theorem~\ref{thm1}.\hfill \(\blacksquare\)

\textcolor{black}{\begin{rem}
    \label{rem:KbyK'}
    To make the connection with the subsequent Section~\ref{RvsN}, it is worth quickly observing here that by choosing \(K'= L+M-\gamma\), \(\gamma\in [2:L]\) and by designing an alternative task assignment technique discussed in Section~\ref{RvsN}, it will be possible to achieve the rate in \eqref{eq:KbyK'} using \(N = \left\lceil\frac{L}{\gamma}\right\rceil\ceil[\Big]{\frac{K}{K'}}\) servers. Such a chosen value of \(K'< L+M-1\) would entail a higher rate, thus bringing to the fore the tradeoff between the rate and the number of servers, which is discussed in detail in Section~\ref{RvsN}.
\end{rem}}

We next present an example pertaining to Case 2 ($K>L+M-1$).

\begin{example}[Scheme 1, Case 2]
\label{Ex:Case2}
Consider the \((K=8,L=2,M=3)\) distributed linearly separable function computation problem, and consider the demand matrix 
    \[
\mathbf{D}
=
\begin{bmatrix}
    1 & 1 & 1 & 1 & 1 & 1 & 1 & 1  \\
    0 & 1 & 2 & 3 & 4 & 5 & 6 & 7 
\end{bmatrix}.
\]
Since each server can compute \(M=|\mathcal{M}_n|=3\) subfunctions, the size of submatrix \(\mathbf{D}_{\mathcal{M}_n^c}\) is \(2\times 5\) for any \(\mathcal{M}_n\subseteq [8]\). Therefore, it is not guaranteed to have a left nullspace for \(\mathbf{D}_{\mathcal{M}_n^c}\), and subsequently, it is not possible to design a transmission vector for the servers using the technique associated with Lemma~\ref{lemma1}. To that end, we partition the demand matrix into two submatrices of equal size, \(\mathbf{D}_1=\mathbf{D}_{(:,[1:4])}\) and \(\mathbf{D}_2=\mathbf{D}_{(:,[5:8])}\), that take the form
    \[
\mathbf{D}_1
=
\begin{bmatrix}
    1 & 1 & 1 & 1  \\
    0 & 1 & 2 & 3 
\end{bmatrix},\quad
\mathbf{D}_2
=
\begin{bmatrix}
     1 & 1 & 1 & 1  \\
     4 & 5 & 6 & 7 
\end{bmatrix}.
\]
Notice that \(\mathbf{D}_1\) and \(\mathbf{D}_2\) have \(K'=K/2=4\) columns, and that \(K'=L+M-1\). Now, for each submatrix, we can design separate task assignment and transmission strategies using Lemma~\ref{lemma1}. Therefore, with \(R=L=2\) transmissions (using Lemma~\ref{lemma1}), the user can obtain the following linear combinations of the subfunctions \(f_j(W_j)\), \(j \in [4]\)
\begin{align*}
    F_1^{(1)}(W_{[1:4]}) &= f_1(W_1)+f_2(W_2)+f_3(W_3)+f_4(W_4),\\
    F_2^{(1)}(W_{[1:4]}) &= f_2(W_2)+2f_3(W_3)+3f_4(W_4).
\end{align*}
Similarly, with a different set of two servers, the following linear combinations of \(f_j(W_j)\), \(j \in \{5,6,7,8\}\) can also be obtained by the user
\begin{align*}
    F_1^{(2)}(W_{[5:8]}) &= f_5(W_5)+f_6(W_6)+f_7(W_7)+f_8(W_8),\\
    F_2^{(2)}(W_{[5:8]}) &= 4f_5(W_5)+5f_6(W_6)+6f_7(W_7)+7f_8(W_8).   
\end{align*}
Recall that the required functions are 
\begin{align*}
    F_1(\mathcal{W}) &= F_1^{(1)}(W_{[1:4]})+F_1^{(2)}(W_{[5:8]}),\\
    F_2(\mathcal{W}) &= F_2^{(1)}(W_{[1:4]})+F_2^{(2)}(W_{[5:8]}).
\end{align*}
Therefore, with \(N=4\) servers, we achieve the rate
\[R_1(K=8,L=2,M=3) = 4.\]
Now, we argue -- drawing from our converse in Section~\ref{section:covering} -- that \(R_{1}(8,2,3)=4\) is, in fact, the optimal rate for the considered parameter setting. When \(K=8\) and \(M=3\), at least 3 servers are required to store all the datasets. We now show that it is not possible to achieve a rate less than \(4\) with \(N=3\) servers. Since each dataset has to be assigned to at least one server, out of \(K=8\) datasets, only one will be assigned to two servers, while the remaining seven will each be assigned to a single server. Consequently, each server will be assigned at least two datasets that are not assigned to the remaining two servers. 
Assume that \(\{k_{1}^{(n)}, k_{2}^{(n)}\}\) are the indices of the datasets that are assigned exclusively to server \(n\), for every \(n\in [3]\). That is, \(W_{k_{1}^{(1)}}\) and \(W_{k_{2}^{(1)}}\) are assigned to the first server, but not to the second and the third server. Since the submatrix \(\mathbf{D}_{\{k_1^{(n)},k_2^{(n)}\}}\) is full rank for every choice of \(\{k_{1}^{(n)}, k_{2}^{(n)}\}\), server \(n\) has to make at least two transmissions, as no other server can make transmissions involving the subfunctions  \(f_{k_{1}^{(n)}}(W_{k_{1}^{(n)}})\) and \(f_{k_{2}^{(n)}}(W_{k_{2}^{(n )}})\). Then, the achievable rate with \(N=3\) servers must satisfy \(R\geq6\).
Therefore, the rate achieved by our scheme \(R_{1}(8,2,3)=4\) with \(N=4\) servers is optimal. Note that, if there are more than 4 servers transmitting, the achievable rate is strictly more than 4, since we do not consider subpacketization of the computed subfunction values. 
\end{example}

The converse idea discussed in Example \ref{Ex:Case2} is generalized and connected to a combinatorial structure called the multi-level covering number, which is discussed in detail in Section \ref{section:covering}.

Before proceeding with Scheme 2, we provide an additional example of Scheme 1 for Case~2. The example highlights the reduction in rate compared to a scheme having a disjoint task assignment policy.
\begin{example}[Scheme 1, Case 2]
Consider the \((K=6,L=2,M=2)\) distributed linearly separable function computation problem. First, consider the disjoint task assignment strategy \(\mathcal{M}^{(\text{disjoint})}= \{\mathcal{M}^{(\text{disjoint})}_1, \mathcal{M}^{(\text{disjoint})}_2,\mathcal{M}^{(\text{disjoint})}_3\}\), where
\begin{align*}
    \mathcal{M}^{(\text{disjoint})}_1=\{1,2\},\quad
    \mathcal{M}^{(\text{disjoint})}_2=\{3,4\},\quad
    \mathcal{M}^{(\text{disjoint})}_3=\{5,6\}.
\end{align*}
Let the demand matrix be
    \[
\mathbf{D}
=
\begin{bmatrix}
    1 & 1 & 1 & 1 & 1 &1 \\
    0 & 1 & 2 & 3 & 4 & 5 
\end{bmatrix}\in \mathbb{F}_{7}^{2\times 6}.
\]
In the case of disjoint task assignment, where each server makes two transmissions, the transmissions made by the respective servers are 
\begin{align*}
    &\mathbf{x}_{1,1}=f_1(W_1)+f_2(W_2), \quad \mathbf{x}_{1,2}=f_2(W_2),\\
    &\mathbf{x}_{2,1}=f_3(W_3)+f_4(W_4), \quad \mathbf{x}_{2,2}=2f_3(W_3)+3f_4(W_4),\\
    &\mathbf{x}_{3,1}=f_5(W_5)+f_6(W_6), \quad \mathbf{x}_{3,2}=4f_5(W_5)+5f_6(W_6).
\end{align*}
Thus, the rate achieved is
\[R_{\text{disjoint}}^*(6,2,2) = 6\] with \(N=3\) servers.\\
Corresponding to the scheme in Theorem~\ref{thm1}, we have the task assignment \(\mathcal{M}=\{\mathcal{M}_1,\mathcal{M}_2,\mathcal{M}_3,\mathcal{M}_4\}\), where
\begin{align*}
    \mathcal{M}_1 &= \{1,3\}, \quad \mathcal{M}_2 = \{2,3\},\\
    \mathcal{M}_3 &= \{4,6\}, \quad \mathcal{M}_4 = \{5,6\}.
\end{align*}
Thus, from Theorem~\ref{thm1}, we have the server transmissions 
\begin{align*}
    &\mathbf{x}_1=f_1(W_1)-f_3(W_3),\\
    &\mathbf{x}_2=f_2(W_3)+2f_3(W_3),\\
    &\mathbf{x}_3=f_4(W_4)-f_6(W_6),\\
    &\mathbf{x}_4=f_5(W_5)+2f_6(W_6).
\end{align*}
Thus, the rate achieved by our proposed scheme is
\[R_{1}(6,2,2) = 4.\]
\end{example}

We note that until now, in Scheme 1, after establishing the task assignment, the delivery follows closely in the footsteps of \cite{WSJC} in designing the nullspaces that govern transmission. The nullspaces are naturally different from \cite{WSJC} which relies on fixed task assignment (disjoint or cyclic), but the design method -- after establishing our task assignment -- is similar. In our setting though, this nullspace approach guarantees decodability for all demand matrices. 

We now proceed with Scheme 2 which includes an additional design that carefully augments the demand matrix to build non-trivial nullspaces for the sub-demand matrices.  


\subsection{Scheme 2: Alternate Scheme Better Suited for Larger \(M\)}
In Section \ref{Case2}, we devised a matrix partitioning technique to build the nullspace condition in \eqref{eq:nullspace} for a given demand matrix. Instead, in this section, we propose another technique for building a target nullspace and satisfying the nullspace condition in Lemma~\ref{lemma1}, where this now involves augmenting the demand matrix. The proposed technique is applicable for large values of \(M\), specifically when \(M\geq K/2\). Since we already have an optimal scheme for the case \(L>K-M\) (Case 1 in Scheme 1), we focus on the scenario where \(L\leq K-M\), and we have the following theorem.

\begin{thm}
    \label{thm2}
        For the $(K,L,M)$ distributed linearly separable function computation problem with \(L<\floor{K/(K-M)}\), 
        the rate
        \begin{equation}
        \label{eq:thm2eq1}
        R_2(K,L,M) = L+1
    \end{equation}
    is achievable with \(N=L+1\) servers.
    Furthermore, for any \(L\leq K-M\), the rate
    \begin{equation}
    \label{eq:L+X}
        R_2(K,L,M) = L+\ceil[\Bigg]{\frac{L}{\floor[\big]{\frac{M}{K-M}}}}
    \end{equation}
    is achievable with \(N=\floor{K/(K-M)}\), if $M\geq K/2$.
\end{thm}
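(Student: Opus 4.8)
The plan is to treat both parts through a single mechanism: \emph{augmenting} the demand matrix with designed virtual rows so that each complementary submatrix acquires a left nullspace, thereby reducing everything to the nullspace condition of Lemma~\ref{lemma1}. Write $s=K-M$ for the per-server ``co-budget'', so that each complement $\mathcal{M}_n^c$ must have size $s$ and each assignment $\mathcal{M}_n$ has size $M$. The parameter $M\ge K/2$ is equivalent to $s\le K/2$, i.e. $t:=\lfloor M/(K-M)\rfloor\ge 1$ and $\lfloor K/(K-M)\rfloor=t+1$, which I will use to split rows into admissible groups.

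For the first part, the hypothesis $L<\lfloor K/(K-M)\rfloor$ is exactly the inequality $(L+1)s\le K$, which lets me pick $L+1$ \emph{pairwise disjoint} blocks $\mathcal{M}_1^c,\dots,\mathcal{M}_{L+1}^c\subseteq[K]$, each of size $s$, and set $\mathcal{M}_n=[K]\setminus\mathcal{M}_n^c$ (size $M$). I then append one virtual row $\mathbf{g}\in\mathbb{F}_q^{1\times K}$, forming $\tilde{\mathbf{D}}=[\mathbf{D};\mathbf{g}]\in\mathbb{F}_q^{(L+1)\times K}$, with $\mathbf{g}|_{\mathcal{M}_n^c}=\mathbf{u}_n^\intercal\mathbf{D}_{\mathcal{M}_n^c}$ for a freely chosen $\mathbf{u}_n\in\mathbb{F}_q^{L}$. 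Disjointness is precisely what makes this simultaneously realizable: the blocks do not overlap, so the prescriptions of $\mathbf{g}$ on different blocks do not interfere. By construction $\mathbf{w}_n:=[\mathbf{u}_n;-1]$ lies in the left nullspace of $\tilde{\mathbf{D}}_{\mathcal{M}_n^c}$ regardless of the rank of $\mathbf{D}_{\mathcal{M}_n^c}$, so the transmission $\mathbf{w}_n^\intercal\tilde{\mathbf{D}}=\mathbf{u}_n^\intercal\mathbf{D}-\mathbf{g}$ vanishes on $\mathcal{M}_n^c$ and is thus supported inside $\mathcal{M}_n$, i.e. computable by server $n$ within budget $M$.

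The point I would treat most carefully is decodability, namely that the $(L+1)\times(L+1)$ matrix $\mathbf{N}_{\mathbf{D},\mathcal{M}}$ with rows $\mathbf{w}_n$ is invertible. Since the last coordinate of every $\mathbf{w}_n$ equals $-1$, I exploit the remaining freedom in the $\mathbf{u}_n$ to force invertibility; for instance taking $\mathbf{u}_n=\mathbf{e}_n$ for $n\in[L]$ and $\mathbf{u}_{L+1}=\mathbf{0}$ renders $[\mathbf{w}_1,\dots,\mathbf{w}_{L+1}]$ block lower-triangular with determinant $\pm1$, which is valid over \emph{any} field. Invertibility then lets the user recover $\tilde{\mathbf{D}}[f_1(W_1),\dots,f_K(W_K)]^\intercal$, whose first $L$ entries are exactly the requested functions, giving rate $L+1$ with $N=L+1$ servers. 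Observe that $\{\mathcal{M}_n\}$ here depends only on $s$ and not on $\mathbf{D}$, which yields the claimed demand-agnostic assignment.

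For the second part I would fuse this with a \emph{row-wise partition}. I fix once and for all $N=t+1=\lfloor K/(K-M)\rfloor$ disjoint size-$s$ blocks (demand-agnostic, possible since $(t+1)s\le K$), and split the $L$ rows of $\mathbf{D}$ into $P=\lceil L/t\rceil$ groups of at most $t$ rows. Each group, an $L_i\times K$ sub-demand matrix with $L_i\le t$, satisfies the Part-1 hypothesis $L_i<t+1=\lfloor K/(K-M)\rfloor$, so I apply the augmentation scheme above to it using any $L_i+1$ of the $t+1$ fixed servers (feasible since $L_i+1\le t+1$), with a group-specific virtual row $\mathbf{g}^{(i)}$ and coefficients $\mathbf{u}_n^{(i)}$. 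Because the block assignment is shared across all groups, no new servers are introduced; a server merely emits one transmission per group in which it participates, and every $f_j(W_j)$ remains available to some participating server since the blocks cover $[K]$ and each $\mathcal{M}_n^c$ lies inside every other $\mathcal{M}_{n'}$. Summing per-group costs yields $\sum_{i=1}^{P}(L_i+1)=L+P=L+\lceil L/t\rceil$ transmissions with $N=\lfloor K/(K-M)\rfloor$ servers, as claimed. The overarching obstacle is enforcing the support constraint and decodability \emph{simultaneously}; the disjoint-block design decouples the per-server support conditions, and the freedom in the augmenting coefficients $\mathbf{u}_n$ secures invertibility.
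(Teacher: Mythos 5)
Your proposal is correct and follows essentially the same route as the paper's proof: demand-agnostic disjoint complement blocks of size $K-M$, augmentation of $\mathbf{D}$ by a single virtual row prescribed blockwise so that each $\tilde{\mathbf{D}}_{\mathcal{M}_n^c}$ acquires a designed left-nullspace vector, a triangular target nullspace matrix (your $\mathbf{u}_n=\mathbf{e}_n$, $\mathbf{u}_{L+1}=\mathbf{0}$ choice coincides, up to sign, with the paper's matrix $\mathbf{T}$ and its generalization in Remark~\ref{rem:TargetNullSpace}), and a row-wise partition into groups of $\lfloor M/(K-M)\rfloor$ rows sharing the fixed assignment for the second part. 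Your per-group count $\sum_i (L_i+1)=L+\lceil L/\lfloor M/(K-M)\rfloor\rceil$ matches the paper's server-by-server tally, so no gap remains.
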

\begin{proof}
    Let \(\mathbf{D}\in \mathbb{F}_q^{L\times K}\) be the demand matrix. First, we consider the case \(L<\tau\), where \(\tau=\floor{K/(K-M)}=1+\floor{M/(K-M)}\) is a positive integer. We set the number of servers \(N=L+1\leq\tau\). To design the task assignment \(\mathcal{M}=\{\mathcal{M}_1,\mathcal{M}_2,\dots,\mathcal{M}_N\}\), we first partition the set \([K]\) as
    \begin{equation*}
    \mathcal{P} = \{\mathcal{P}_1,\mathcal{P}_2,\dots,\mathcal{P}_\tau\}
    \end{equation*}
    where \(|\mathcal{P}_t|= K-M\), for every \(t\in [\tau-1]\), and \(|\mathcal{P}_\tau|= K - (\tau-1)(K-M)\). Note that 
    \begin{align*}
        |\mathcal{P}_\tau|&= K - (\tau-1)(K-M)\\
        &=K-\floor{M/(K-M)}(K-M)\\
        &\geq K-M.
    \end{align*}
    Without loss of generality, we assume that for every \(t\in [\tau-1]\) then
    \begin{equation}
    \label{eq:pt}
        \mathcal{P}_t = \{(t-1)(K-M)+1,(t-1)(K-M)+2,\dots,t(K-M)\}
    \end{equation}
     and that 
     \begin{equation}
     \label{eq:ptau}
        \mathcal{P}_\tau = \{(\tau-1)(K-M)+1,(\tau-1)(K-M)+2,\dots,K\}
    \end{equation}
which means that the task assignment set \(\mathcal{M}\) takes the form 
    \begin{equation}
        \label{eq:datascheme2}
        \mathcal{M}=\{\mathcal{P}_1^c,\mathcal{P}_2^c,\dots,\mathcal{P}_N^c\}.
    \end{equation}
    Note that \(|\mathcal{M}_n|\leq M\), for every \(n\in [N]\). 
    Now, our objective is to design the server transmissions using the nullspace approach presented in Lemma~\ref{lemma1}. However, the left nullspace of \(\mathbf{D}_{\mathcal{M}_n^c}\), for any \(n\in [N]\), does not need to have a non-trivial vector, since \(L\leq K-M\). Considering that, we augment the demand matrix by an additional row such that the augmented demand matrix, denoted by \(\tilde{\mathbf{D}}\), satisfies the required condition in \eqref{eq:nullspace}. Then
    \begin{equation*}
        \tilde{\mathbf{D}}
=
\begin{bmatrix}
    \mathbf{D}\\
     \tilde{\mathbf{d}}
\end{bmatrix} \in \mathbb{F}_{q}^{(L+1)\times K}
    \end{equation*}
    where the construction of \(\tilde{\mathbf{d}}\in \mathbb{F}_{q}^{1\times K}\) is explained in the sequel. First, we define a matrix 
    \begin{equation}
        \label{eq:Target}
        \mathbf{T} = \begin{bmatrix}
            1 & 0 &  \dots & 0 & 1\\
            0 & 1 &  \dots & 0 & 1\\
            \vdots & \vdots  & \ddots  & \vdots & 1\\
            0 & 0 &  \dots & 1 & 1\\
            0 & 0 &  \dots & 0 & 1\\
        \end{bmatrix} \in \mathbb{F}_q^{N\times N}
    \end{equation}
    which we refer to as the `target nullspace matrix'. Note that \(\text{rank}_q(\mathbf{T})=N\) irrespective of the field size \(q\). We now construct the vector \(\tilde{\mathbf{d}}\) as a concatenation of several vectors, where
    \begin{equation*}
         \tilde{\mathbf{d}} = [ \tilde{\mathbf{d}}_{1},\tilde{\mathbf{d}}_2,\dots,\tilde{\mathbf{d}}_N]
    \end{equation*}
    if \(\tau=\frac{K}{K-M}\) (when \((K-M)\) divides \(K\)) and \(L=\tau-1\). In that case, \(\tilde{\mathbf{d}}_{n}\in \mathbb{F}_q^{1\times (K-M)}\), for every \(n\in [N]\). If either \(\tau<\frac{K}{K-M}\) or \(L<\tau-1\), we have
    \begin{equation*}
         \tilde{\mathbf{d}} = [ \tilde{\mathbf{d}}_{1},\tilde{\mathbf{d}}_2,\dots,\tilde{\mathbf{d}}_N,\tilde{\mathbf{d}}_{N+1}]
    \end{equation*}
    where \(\tilde{\mathbf{d}}_{n}\in \mathbb{F}_q^{1\times (K-M)}\), for every \(n\in [N]\), and \(\tilde{\mathbf{d}}_{N+1}\in \mathbb{F}_q^{1\times K-N(K-M)}\). For every \(n\in [N-1]\), we now define
    \begin{equation}
        \label{eq:partn}
        \tilde{\mathbf{d}}_{n} = -\mathbf{D}_{\mathcal{P}_n}(n,:)
    \end{equation}
    where \(\mathbf{D}_{\mathcal{P}_n}(n,:)\) is the \(n\)-th row of the submatrix \(\mathbf{D}_{\mathcal{P}_n}\). Further, we define
    \begin{equation}
        \label{eq:partN}
        \tilde{\mathbf{d}}_{N} = \mathbf{0}_{1\times (K-M)}.
    \end{equation}
   Furthermore, \(\tilde{\mathbf{d}}_{N+1}\) can be set to any row vector of length \(K-N(K-M)\). 
Now, using Lemma~\ref{lemma1} on the augmented demand matrix \(\tilde{\mathbf{D}}\) and the task assignment \(\mathcal{M}\) in \eqref{eq:datascheme2}, we show the achievability of the rate \(R_2=N=L+1\).
  That is, we show that 
  \begin{equation*}
      \text{rank}_q(\mathbf{N}_{\tilde{\mathbf{D}},\mathcal{M}})=L+1
  \end{equation*}
  where
\begin{equation*}
    \mathbf{N}_{\tilde{\mathbf{D}},\mathcal{M}} = [\mathcal{N}(\tilde{\mathbf{D}}_{\mathcal{M}_1^c}^\intercal),\mathcal{N}(\tilde{\mathbf{D}}_{\mathcal{M}_2^c}^\intercal),\dots,\mathcal{N}(\tilde{\mathbf{D}}_{\mathcal{M}_N^c}^\intercal)]^\intercal.
\end{equation*}  
Recall that, \(\mathcal{M}_n^c=\mathcal{P}_n\), for every \(n\in [N]\). Therefore, for every \(n\in [N]\), we have
\begin{equation*}
    \tilde{\mathbf{D}}_{\mathcal{M}_n^c} = \tilde{\mathbf{D}}_{\mathcal{P}_n}.
\end{equation*}
However, in the submatrix \(\tilde{\mathbf{D}}_{\mathcal{P}_n}\), we have
\begin{equation}
\label{eq:tilde_d}
    \tilde{\mathbf{D}}_{\mathcal{P}_n}(n,:)+\tilde{\mathbf{D}}_{\mathcal{P}_n}(N,:)=\tilde{\mathbf{D}}_{\mathcal{P}_n}(n,:)+\tilde{\mathbf{d}}_n=\mathbf{0}_{1 \times (K-M)}
\end{equation}
for every \(n\in [N-1]\), where \eqref{eq:tilde_d} follows from the construction of \(\tilde{\mathbf{d}}_n\) in \eqref{eq:partn}. Therefore, for every \(n\in [N-1]\), the vector
\begin{equation*}
    \mathbf{e_n}+\mathbf{e_N}\in \text{span}(\mathcal{N}(\tilde{\mathbf{D}}_{\mathcal{M}_n^c}^\intercal))
\end{equation*}
where $\mathbf{e}_i$ is an $N-$length vector with a $1$ at the $i$-th position and $0$'s elsewhere. Notice that, we also have \(\mathbf{e_n}+\mathbf{e_N}=\mathbf{T}(n,:)\), for every \(n\in [N-1]\). Therefore, the vector \(\mathbf{T}(n,:)\), \(n\in [N-1]\), is a row in \(\mathbf{N}_{\tilde{\mathbf{D}},\mathcal{M}}\). Furthermore, in the submatrix  \(\tilde{\mathbf{D}}_{\mathcal{M}_N^c}=\tilde{\mathbf{D}}_{\mathcal{P}_N}\), we have
\(\tilde{\mathbf{D}}_{\mathcal{P}_N}(N,:)=\tilde{\mathbf{d}}_N=\mathbf{0}\). Thus, the vector
\begin{equation*}
    \mathbf{e_N}\in \text{span}(\mathcal{N}(\tilde{\mathbf{D}}_{\mathcal{M}_N^c}^\intercal))
\end{equation*}
and note that \(\mathbf{e_N}=\mathbf{T}(N,:)\). Consequently, the vector \(\mathbf{T}(N,:)\) is a row in \(\mathbf{N}_{\tilde{\mathbf{D}},\mathcal{M}}\). Therefore, we have
\begin{equation*}
    \text{rank}_q(\mathbf{N}_{\tilde{\mathbf{D}},\mathcal{M}})\geq \text{rank}_q(\mathbf{T})=N=L+1.
\end{equation*}
However, the number of columns in \(\mathbf{N}_{\tilde{\mathbf{D}},\mathcal{M}}\) is \(N\). Thus, the rank cannot exceed \(N\). Therefore,  
\begin{equation}
\label{eq:L+1}
    \text{rank}_q(\mathbf{N}_{\tilde{\mathbf{D}},\mathcal{M}})=L+1.
\end{equation}
The achievability of \(R_2(K,L,M)=L+1\) follows from \eqref{eq:L+1}.

 It remains to show the achievability of \eqref{eq:L+X} for a general \(L\), with \(N=\floor{K/(K-M)}\). Let \(\mathbf{D}\in \mathbb{F}_q^{L\times K}\) be the demand matrix. If \(L<\floor{K/(K-M)}\), the rate expression in \eqref{eq:L+X} reduces to 
\begin{equation*}
L+\ceil[\Bigg]{\frac{L}{\floor[\big]{\frac{M}{K-M}}}}\leq L+\ceil[\Bigg]{\frac{\floor[\big]{\frac{K}{K-M}}-1}{\floor[\big]{\frac{M}{K-M}}}}= L+\ceil[\Bigg]{\frac{\floor[\big]{\frac{M}{K-M}}}{\floor[\big]{\frac{M}{K-M}}}} =L+1.       
\end{equation*}
 In order to show the achievability of the expression in \eqref{eq:L+X} for \(L\geq \floor{K/(K-M)}\), we partition the demand matrix row-wise, and apply the technique presented in the first part of this proof on each of the sub-demand matrices. We have
 \begin{equation*}
     \mathbf{D} =\begin{bmatrix}
         \mathbf{D}_1\\
         \mathbf{D}_2\\
         \vdots\\
         \mathbf{D}_{\ceil{L/{L'}}}
     \end{bmatrix}
 \end{equation*}
 where \(L' = \floor{K/(K-M)}-1=\floor{M/(K-M)}\). For every \(\lambda \in [\ceil{L/{L'}}-1]\), we have the submatrices \(\mathbf{D}_\lambda\in \mathbb{F}_q^{L'\times K}\), and \(\mathbf{D}_{\ceil{L/{L'}}}\in \mathbb{F}_q^{(L-(\ceil{L/{L'}}-1)L')\times K}\). Further, 
 \begin{align*}
\big[F_1(\mathcal{W}),F_2(\mathcal{W}),\dots,F_L(\mathcal{W})\big]^\intercal = \mathbf{D}\left[f_1(W_1),f_2(W_2),\dots,f_K(W_K)\right]^\intercal
 \end{align*}
 implies that for every \(\lambda \in [\ceil{L/{L'}}-1]\) it holds that  
 \begin{align}
 \label{eq:lambda1}
     \big[F_{(\lambda-1)L'+1}(\mathcal{W}),F_{(\lambda-1)L'+2}(\mathcal{W}),\dots,F_{\lambda L'}(\mathcal{W})\big]^\intercal = \mathbf{D}_\lambda\left[f_1(W_1),f_2(W_2),\dots,f_K(W_K)\right]^\intercal
 \end{align}
and 
 \begin{equation}
 \label{eq:lambda2}
     \big[F_{(\ceil{L/{L'}}-1)L'+1}(\mathcal{W}),F_{(\ceil{L/{L'}}-1)L'+2}(\mathcal{W}),\dots,F_{L}(\mathcal{W})\big]^\intercal = \mathbf{D}_{\ceil{L/{L'}}}\left[f_1(W_1),f_2(W_2),\dots,f_K(W_K)\right]^\intercal.
 \end{equation}
 Since \(L'<\floor{K/(K-M)}\), the task assignment and transmissions based on a target nullspace matrix, presented in the first part of this proof, are applicable in each sub-demand matrix \(\mathbf{D}_\lambda\), \(\lambda \in [\ceil{L/{L'}}]\). Note that the task assignment set \(\mathcal{M}\) is independent of \(\mathbf{D}_\lambda\), and we have
 \begin{equation*}
      \mathcal{M}=\{\mathcal{P}_1^c,\mathcal{P}_2^c,\dots,\mathcal{P}_N^c\}
 \end{equation*}
 where the sets \(\mathcal{P}_n, n\in [N]\), are defined in \eqref{eq:pt} and \eqref{eq:ptau}. 
 For each \(\mathbf{D}_\lambda\), the transmission of the servers can be designed using the target nullspace matrix. The functions corresponding to each \(\mathbf{D}_\lambda\), \(\lambda \in [\ceil{L/{L'}}]\) can thus be decoded by the user (from \eqref{eq:lambda1} and \eqref{eq:lambda2}).

 We now calculate the rate achieved by this distributed computing system. Recall that there are \(N=L'+1\) servers in the system. Corresponding to every \(\mathbf{D}_\lambda\), \(\lambda\in [\ceil{L/{L'}}-1]\), each server makes a transmission, and corresponding to \(\mathbf{D}_{\ceil{L/{L'}}}\), the first \(L-(\ceil{L/{L'}}-1)L'+1\) servers make one transmission each and the remaining \(N-\left(L-(\ceil{L/{L'}}-1)L'+1\right)=L'\ceil{L/{L'}}-L\) servers do not make any transmission. Thus, the rate is
 \begin{align*}
     R_2(K,L,M) &= N\left(\left\lceil{\frac{L}{L'}}\right\rceil-1\right)+ L-(\ceil{L/{L'}}-1)L'+1\\
     R_2(K,L,M) &= (L'+1)\left(\left\lceil{\frac{L}{L'}}\right\rceil-1\right)+ L-(\ceil{L/{L'}}-1)L'+1\\
     &=L+\left\lceil{\frac{L}{L'}}\right\rceil\\
     &=L+\left\lceil{\frac{L}{\left\lfloor\frac{M}{K-M}\right\rfloor}}\right\rceil.
 \end{align*}
 This completes the proof of Theorem~\ref{thm2}.
\end{proof}
\begin{rem}
    \label{rem:Dindependent}
     Unlike Scheme 1, the task assignment \(\mathcal{M}\) in Scheme 2 is independent of the demand matrix. However, if \(L<\floor{K/(K-M)}-1\), only \(L+1\) servers need to perform the subfunction computations and the corresponding transmissions; not all servers need to participate in the computation phase.
\end{rem}

\begin{rem}
    \label{rem:TargetNullSpace}
    The target nullspace matrix \(\mathbf{T}\) need not be the one defined in \eqref{eq:Target}.
    It can be any \(N\times N\) matrix over \(\mathbb{F}_q\) that satisfies the following two properties.
    \begin{enumerate}
        \item \(\text{rank}_q(\mathbf{T})=N\).
        \item The entries of the \(N\)-th column of \(\mathbf{T}\) are all non-zero.
    \end{enumerate}
    The first property ensures the decodability of the demanded functions, and the second property is required to construct the \(L+1\)-th row. The second property also ensures the existence of a nullspace for the corresponding submatrix of the augmented demand matrix \(\tilde{\mathbf{D}}\). Having zero as the \(N\)-th entry of a row implies the existence of a non-trivial nullspace for the corresponding submatrix of the original demand matrix \(\mathbf{D}\), which is not guaranteed.   
\end{rem}
\begin{example}[Scheme 2]
\label{ex:scheme2} 
    Consider the \((K=6,L=2,M=4)\) distributed linearly separable function computation problem with a demand matrix
    \[
\mathbf{D}
=
\begin{bmatrix}
    1 & 1 & 1 & 1 & 1 &1 \\
    0 & 1 & 2 & 3 & 4 & 5 
\end{bmatrix}
\]
and note that \(L<K/(K-M)=3\).
From the scheme described in the proof of Theorem~\ref{thm2}, we have \(N=L+1=3\). The task assignment on the three servers is
\begin{align*}
    \mathcal{M}_1 &= \{1,2,3,4\},\\
    \mathcal{M}_2 &= \{1,2,5,6\},\\
    \mathcal{M}_3 &= \{3,4,5,6\}.
\end{align*}
We choose the matrix
 \[
\mathbf{T}
=
\begin{bmatrix}
    1 & 0 & -1 \\
    0 & 1 & -1 \\
    0 & 0 & 1
\end{bmatrix}
\]
as the target nullspace matrix. Then the augmented demand matrix \(\tilde{\mathbf{D}}\) is constructed in such a way that the vector \(\mathbf{T}(1,:) = [1, 0, -1]\) falls in the left nullspace of the submatrix \(\tilde{\mathbf{D}}_{\{5,6\}}\).  
Similarly, the vectors \([0, 1, -1]\) and \([0, 0, 1]\) should be in the left nullspaces of \(\tilde{\mathbf{D}}_{\{3,4\}}\) and \(\tilde{\mathbf{D}}_{\{1,2\}}\), respectively.
Therefore, we get  
    \[
\tilde{\mathbf{D}}
=
\begin{bmatrix}
    1 & 1 & 1 & 1 & 1 & 1 \\
    0 & 1 & 2 & 3 & 4 & 5 \\
    0 & 0 & 2 & 3 & 1 & 1
\end{bmatrix}.
\]
The following expressions denote the linear combinations of the computed subfunctions sent by the three servers, respectively:
\begin{align*}
     &\quad \mathbf{x}_1 = f_1(W_1)+f_2(W_2)-f_3(W_3)-2f_4(W_4)\\
    &\quad \mathbf{x}_2 = f_2(W_2)+3f_5(W_5)+4f_6(W_6)\\
    &\quad \mathbf{x}_3 = 2f_3(W_3)+3f_4(W_4)+f_5(W_5)+f_6(W_6).
\end{align*}
Note that the user can decode the required functions, since
\begin{align*}
    F_1(\mathcal{W})=\mathbf{x}_1+\mathbf{x}_3,\quad
    F_2(\mathcal{W})=\mathbf{x}_2+\mathbf{x}_3.
\end{align*}

\end{example}
\begin{rem}
    \label{rem:Ex3}
    In Example~\ref{ex:scheme2}, instead of adding the particular row that we have added, it is, in fact, possible to add any random third row and achieve the rate \(R_1(6,2,4)=3\) using Scheme 1. This holds because when \(L=3\) and \(M=4\), we have \(K=L+M-1=6\), and in this case, Lemma~\ref{lemma1} provides the achievability of \(R_1(6,2,4)=3\). However, it is not the case in general. For instance, when \(K=9\), \(M=6\), and \(L=2\), the achievable rate corresponding to Scheme 2 is \(R_2=3\), while Scheme 1 requires 2 additional random rows to fall into the regime \(K=L+M-1\), and the corresponding rate would be \(R_1=4\). 
\end{rem}
The achievability result in \eqref{eq:L+X} is obtained by partitioning the demand matrix, row-wise, into sub-demand matrices and applying the achievability result in \eqref{eq:thm2eq1} in each sub-demand matrix. 
These results in Theorem~\ref{thm2} can also be applied on sub-demand matrices obtained by dividing a demand matrix column-wise, as in the proof of Theorem~\ref{thm1}.
In certain instances, this could result in a lower rate than the scheme in Theorem~\ref{thm1}. One such example is presented below.
\begin{example}[Scheme 2 with column partitioning]
   Let \(K=40, M =15\), and \(L=3\). For this setting, the scheme in Theorem~\ref{thm1} achieves the rate \[R_1(40,3,15) = L\left\lceil\frac{K}{L+M-1}\right\rceil=9\] with \(N=9\) servers. However, partitioning \(\mathbf{D}\) into two equal-sized (\(3\times 20\)) submatrices \(\mathbf{D}_1\) and \(\mathbf{D}_2\) and applying Scheme 2 separately on these submatrices will result in a rate \[R_{\text{ach}}(40,3,15)=2(L+1)=8\] with \(N=8\) servers.
\end{example}

\section{Converses and Order-Optimality}
\label{sec:converse}
In this section, we first derive an information-theoretic lower bound on the rate of a general \((K,L,M)\) distributed linearly separable function computation problem. The lower bound assumes that the system can utilize any number of servers, and that each server transmits linear combinations of its computed subfunction outputs, without any subpacketization. With the converse in place, we will then show that our proposed scheme in Theorem~\ref{thm1} is within a constant gap (gap is upper bounded by~2 when a divisibility condition is met, and the gap is upper bounded by~3, in general) from the derived information-theoretic lower bound, when the field size $q\geq eK/M$. 
\textcolor{black}{We will subsequently derive an additional converse using covering designs, and we will show that for a specific case, the new converse bound matches the achievable rate in Theorem~\ref{thm1}.}

\textcolor{black}{\subsection{Information-Theoretic Converse}
\label{section:ITconverse}}
\begin{thm}
For the $(K,L,M)$ distributed linearly separable function computing problem, the optimal rate $R^{*}(K,L,M)$ satisfies
\begin{equation}
R^{*}(K,L,M) \geq \max \Bigl( L, \frac{LK}{L+M+\log_q \binom{K}{M}}\Bigr).
\label{eq:converse}
\end{equation}
\label{th:conv}
\end{thm}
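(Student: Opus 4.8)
The plan is to exploit the matrix-factorization reformulation $\mathbf{D}=\mathbf{C}\mathbf{A}$ established in \eqref{eq:TransDCA}, where $\mathbf{C}\in\mathbb{F}_q^{L\times R}$, $\mathbf{A}\in\mathbb{F}_q^{R\times K}$, and every row of $\mathbf{A}$ has support of size at most $M$, and to lower bound the inner dimension $R$ by two independent arguments whose maximum gives \eqref{eq:converse}. The first bound, $R^{*}\ge L$, is immediate from rank considerations: for any full-rank $\mathbf{D}$ and any valid factorization, $L=\mathrm{rank}_q(\mathbf{D})=\mathrm{rank}_q(\mathbf{C}\mathbf{A})\le\min\{\mathrm{rank}_q(\mathbf{C}),\mathrm{rank}_q(\mathbf{A})\}\le R$, since $\mathbf{C}$ has $R$ columns and $\mathbf{A}$ has $R$ rows. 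Hence every factorization of a full-rank $\mathbf{D}$ needs inner dimension at least $L$, so $R^{*}(K,L,M)\ge L$.

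The second, and main, bound is a degrees-of-freedom counting argument. First I would upgrade the worst-case guarantee from full-rank demand matrices to \emph{all} matrices in $\mathbb{F}_q^{L\times K}$: if $\mathbf{D}$ has rank $r<L$, I extend its row space to an $L$-dimensional space, pick a full-rank $\mathbf{D}'\in\mathbb{F}_q^{L\times K}$ with that row space, and write $\mathbf{D}=\mathbf{B}\mathbf{D}'$ for some $\mathbf{B}\in\mathbb{F}_q^{L\times L}$; since $\mathbf{D}'$ is full rank it admits an $M$-sparse factorization $\mathbf{D}'=\mathbf{C}'\mathbf{A}'$ of inner dimension at most $R^{*}$, and then $\mathbf{D}=(\mathbf{B}\mathbf{C}')\mathbf{A}'$ is such a factorization of the \emph{same} inner dimension with the \emph{same} sparse $\mathbf{A}'$. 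Consequently, after padding all factorizations to inner dimension exactly $R^{*}$, the map $(\mathbf{C},\mathbf{A})\mapsto\mathbf{C}\mathbf{A}$ restricted to $\mathbf{C}\in\mathbb{F}_q^{L\times R^{*}}$ and to $\mathbf{A}\in\mathbb{F}_q^{R^{*}\times K}$ with $M$-sparse rows must be surjective onto all of $\mathbb{F}_q^{L\times K}$.

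I would then count images. There are $q^{LR^{*}}$ choices for $\mathbf{C}$, and each row of $\mathbf{A}$ is one of at most $\binom{K}{M}q^{M}$ vectors of support at most $M$ (choose an $M$-subset containing the support, then fill it arbitrarily), so at most $\bigl(\binom{K}{M}q^{M}\bigr)^{R^{*}}$ choices for $\mathbf{A}$. Surjectivity then forces
\begin{equation*}
q^{LK}\le q^{LR^{*}}\Bigl(\binom{K}{M}q^{M}\Bigr)^{R^{*}},
\end{equation*}
and taking $\log_q$ yields $LK\le R^{*}\bigl(L+M+\log_q\binom{K}{M}\bigr)$, i.e.\ the claimed bound. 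Equivalently this can be phrased entropically by taking $\mathbf{D}$ uniform on $\mathbb{F}_q^{L\times K}$, so that $LK=H_q(\mathbf{D})\le H_q(\mathbf{C})+H_q(\mathbf{A})$ with the two terms bounded by $LR^{*}$ and $R^{*}(M+\log_q\binom{K}{M})$, respectively.

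The step I expect to be the crux is the reduction to all matrices that produces the clean factor $q^{LK}$: counting only full-rank matrices gives $\prod_{i=0}^{L-1}(q^{K}-q^{i})$, whose $\log_q$ is strictly below $LK$ and would weaken the constant. The extension via $\mathbf{D}=\mathbf{B}\mathbf{D}'$ is exactly what pins the bound at $LK$. The only other points requiring care are verifying that the $M$-superset overcounting keeps $\binom{K}{M}q^{M}$ a valid upper bound on the number of admissible rows, and the harmless padding of $\mathbf{A}$ with nonzero $M$-sparse rows to respect the nonempty-support convention without changing the product.
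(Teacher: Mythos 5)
Your proposal is correct and yields exactly the inequality of Theorem~\ref{th:conv}, but it gets there by a different route than the paper. The paper's proof is an entropic chain: it treats $(\mathbf{C},\mathbf{A},\mathbf{D})$ as jointly distributed, uses as its key lemma that $\mathbf{C}$ is \emph{uniquely determined} by $(\mathbf{D},\mathbf{A})$ (assuming $\mathbf{A}$ has full row rank and $\text{Rowspace}(\mathbf{D})\subseteq\text{Rowspace}(\mathbf{A})$) to conclude $H(\mathbf{C}\,|\,\mathbf{D})\leq H(\mathbf{A}\,|\,\mathbf{D})$ and hence $H(\mathbf{C},\mathbf{D})\leq H(\mathbf{A},\mathbf{D})$, and then bounds $H(\mathbf{A})\leq R\bigl(M+\log_q\binom{K}{M}\bigr)$ exactly as you do, bounds $H(\mathbf{D}\,|\,\mathbf{A})< RL$ by counting ordered $L$-tuples of independent vectors in the row space of $\mathbf{A}$ (your $q^{LR^{*}}$ count of matrices $\mathbf{C}$ plays the identical numerical role), and finally lower-bounds $H(\mathbf{C},\mathbf{D})\geq H(\mathbf{D})=LK$ with $\mathbf{D}$ uniform. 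Your version replaces this chain by bare surjectivity counting, which buys two things. First, it dispenses entirely with the uniqueness-of-$\mathbf{C}$ lemma and with any full-row-rank assumption on $\mathbf{A}$, so the argument is more elementary and slightly more robust. Second, and more substantively, your extension lemma $\mathbf{D}=\mathbf{B}\mathbf{D}'$ for rank-deficient $\mathbf{D}$ explicitly justifies a step the paper leaves implicit: the achievability guarantee (and the definition of $R^{*}$) concerns only full-rank demand matrices, whereas setting $H(\mathbf{D})=LK$ requires $\mathbf{D}$ uniform over all of $\mathbb{F}_q^{L\times K}$; without your reduction one would only obtain $\log_q\prod_{i=0}^{L-1}\bigl(q^{K}-q^{i}\bigr)$, which is strictly below $LK$ and would degrade the constant, so you have in effect patched a small gap. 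Your housekeeping is also sound: padding $\mathbf{A}$ with nonzero $1$-sparse rows and $\mathbf{C}$ with zero columns preserves the product and the nonempty-support convention, the $\binom{K}{M}q^{M}$ overcount is a valid upper bound, the infimum defining $R^{*}$ is attained since rates are integers in $[L,K]$, and your rank-based proof of $R^{*}\geq L$ matches the paper's one-line justification.
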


\begin{proof}
Since the user wants $L$ independent linear combinations of $f_j(W_j)$'s, we get 
\begin{equation}
 R^{*}(K,L,M) \geq L.
 \label{eq:conv1}
\end{equation}
To show $R^{*}(K,L,M) \geq \frac{LK}{L+M+\log_q \binom{K}{M}}$,  we resort to the equivalent matrix factorization formulation
of the $(K,L,M)$ distributed computing problem. Recall that given a matrix $\mathbf{D} \in \mathbb{F}_q^{L \times K}$ of $\text{rank}_q(\mathbf{D})=L$, we aim to jointly design two matrices $\mathbf{C} \in \mathbb{F}_q^{L \times R}$ and $\mathbf{A} \in \mathbb{F}_q^{R \times K}$ such that $\mathbf{D}=\mathbf{C}\mathbf{A}$, with the following constraints $||\mathbf{A}(r,:) ||_0 \leq M, \forall r \in [R]$, and $\text{rank}_q(\mathbf{A})=R \geq L$. Assume that there exists such a decomposition $\mathbf{D}=\mathbf{C}\mathbf{A}$. Then, the conditional joint entropy of the matrices $\mathbf{A}$ and $\mathbf{C}$ given $\mathbf{D}$ can be expressed as follows
\begin{subequations}
\begin{align}
    H(\mathbf{A},\mathbf{C}|\mathbf{D}) & = 
    H(\mathbf{A}|\mathbf{D})+H(\mathbf{C}|\mathbf{D},\mathbf{A})\label{eq:joint1}\\ 
    & = H(\mathbf{A}|\mathbf{D}). \label{eq:joint2}
\end{align}
\label{eq:joint}
\end{subequations}
Note that all entropies mentioned in this section are computed to the base $q$. The transition from \eqref{eq:joint1} to \eqref{eq:joint2} is because the matrix $\mathbf{C}$ is deterministic when the matrices $\mathbf{D}$ and $\mathbf{A}$ are known, i.e., given two matrices $\mathbf{D} \in \mathbb{F}_q^{L \times K}$ and $\mathbf{A} \in \mathbb{F}_q^{R \times K}$ with $\text{rank}_q(\mathbf{D})=L$, $\text{rank}_q(\mathbf{A})=R \geq L$, and $\text{Rowspace}(\mathbf{D}) \subseteq \text{Rowspace}(\mathbf{A}) $, then there exists a  unique $\mathbf{C}$ such that $\mathbf{D}=\mathbf{C}\mathbf{A}$. Therefore, we get
$H(\mathbf{C}|\mathbf{D}, \mathbf{A})=0$. Equation \eqref{eq:joint} can be equivalently written as 
\begin{equation}
 H(\mathbf{A},\mathbf{C}|\mathbf{D}) = H(\mathbf{C}|\mathbf{D})+H(\mathbf{A}|\mathbf{D},\mathbf{C}).
 \label{eq:jointeq}
\end{equation}
Since $H(\mathbf{A}|\mathbf{D},\mathbf{C}) \geq 0$, from \eqref{eq:joint2} and \eqref{eq:jointeq}, we have
\begin{equation}
H(\mathbf{C}|\mathbf{D}) \leq H(\mathbf{A}|\mathbf{D}).
\end{equation}
Equivalently,
\begin{equation}
H(\mathbf{C},\mathbf{D})  \leq H(\mathbf{A},\mathbf{D}).
\label{eq:jointineq}
\end{equation}
We know that $H(\mathbf{A},\mathbf{D})= H(\mathbf{A}) + H(\mathbf{D} | \mathbf{A})$, therefore, an upper bound for $H(\mathbf{A},\mathbf{D})$ can be easily computed by upper bounding the terms $H(\mathbf{A})$ and $H(\mathbf{D} | \mathbf{A})$.

Let us first compute $H(\mathbf{A})$. Recall that the rows of the matrix $\mathbf{A} \in \mathbb{F}_q^{R \times K}$ are $M-$sparse. Hence, the total number of possible configurations of $\mathbf{A}(r,:)$, $r \in [R]$, is at most $\binom{K}{M}q^M$, where $\binom{K}{M}$ accounts for the choice of positions of the $M$ non-zero entries, and $q^M$ corresponds to the maximum choices for the values in those $M$ positions. Therefore, we get
\begin{align}
  H(\mathbf{A}) & \leq \log_q \left (\binom{K}{M}q^M \right)^R \nonumber \\
  & = R \Bigl( \log_q \binom{K}{M} + M \Bigr).
  \label{eq:entropyA}
\end{align}
Now, consider $H(\mathbf{D} | \mathbf{A})$, where $\mathbf{D} \in \mathbb{F}_q^{L \times K}$ and $\mathbf{A} \in \mathbb{F}_q^{R \times K}$, $R \geq L$. From the relation $\mathbf{D}=\mathbf{C}\mathbf{A}$, we know that the rows of $\mathbf{D}$ form an $L-$dimensional vector space in an $R-$dimensional space spanned by the rows of $\mathbf{A}$. 
Hence, to upper bound $H(\mathbf{D} |\mathbf{A})$, we need to compute the number of ways in which $L$ linearly independent vectors can be chosen from a given $R-$dimensional space, which is precisely $(q^R-1)(q^R-q)\times\cdots\times(q^R-q^{R-L+1})$. Therefore, 
\begin{subequations}
\begin{align}
H(\mathbf{D} | \mathbf{A}) & \leq \log_q\Bigl( (q^R-1)(q^R-q)\cdots(q^R-q^{R-L+1})\Bigr)\\
& < \log_q (q^R-1)^L \\
& < RL. \label{eq:condent}
\end{align}
\end{subequations}
Thus, from \eqref{eq:entropyA} and \eqref{eq:condent}, we have
\begin{equation}
 H (\mathbf{A}, \mathbf{D}) \leq R \Bigl( \log_q \binom{K}{M} + M  +L \Bigr).
 \label{eq:jointad}
\end{equation}

Now, consider $H(\mathbf{C},\mathbf{D})$. We know that 
\begin{subequations}
\begin{align}
   H(\mathbf{C},\mathbf{D}) & \geq \max \bigl(H(\mathbf{C}), H(\mathbf{D})\bigr) \label{eq:jointcd1}\\
   & = \max(LR, LK)\label{eq:jointcd2}\\
   & = LK \label{eq:jointcd3}
\end{align}
\label{eq:jointcd}
\end{subequations}
where the transition from \eqref{eq:jointcd1} to \eqref{eq:jointcd2} is because $H(\mathbf{C}) \leq LR$ and $H(\mathbf{D}) \leq LK$ (the above inequalities are met with equality if $\mathbf{C}$ and $\mathbf{D}$ are uniform i.i.d. over $\mathbb{F}_q$). Since $L \leq R \leq K$, we get \eqref{eq:jointcd3}. Thus, using \eqref{eq:jointcd3} and \eqref{eq:jointad}, equation~\eqref{eq:jointineq} becomes
\begin{equation}
 LK \leq  R \Bigl( \log_q \binom{K}{M} + M  +L \Bigr).
 \label{eq:lbineq}
\end{equation}
Upon rearranging \eqref{eq:lbineq}, we get 
\begin{equation*}
R \geq \frac{LK}{L+M+\log_q \binom{K}{M}}
\end{equation*}
which implies
\begin{equation}
R^{*}(K,L,M) \geq \frac{LK}{L+M+\log_q \binom{K}{M}}.
\label{eq:lb2ineq}
\end{equation}
Thus, by combining \eqref{eq:conv1} and \eqref{eq:lb2ineq}, we get
\begin{equation*}
  R^{*}(K,L,M) \geq \max\Bigl(L, \frac{LK}{L+M+\log_q \binom{K}{M}}\Bigr).
\end{equation*}
This completes the proof of Theorem~\ref{th:conv}.
\end{proof}
We now characterize the gap to optimal of the performance of Scheme 1 using the converse in \eqref{eq:converse}.
\begin{thm}
\label{thm:gap}
 For the $(K,L,M)$ distributed linearly separable function computation problem, the achievable rate in Theorem~\ref{thm1} satisfies the following relation when $(L+M-1)|K$.
\begin{equation}
 \frac{R_1(K,L,M)}{R^{*}(K,L,M)} \leq 1+\frac{\log_q\binom{K}{M}+1}{L+M-1}.
 \label{eq:gap1}
\end{equation}
Furthermore, \textcolor{black}{if $q\geq \frac{eK}{M}$,} 
\eqref{eq:gap1} reduces to 
 \begin{equation}
1 \leq \frac{R_1(K,L,M)}{R^{*}(K,L,M)} \leq 2.
\label{eq:2gap}
 \end{equation}
When the divisibility condition $(L+M-1)|K$ does not hold, we have
 \begin{equation}
\frac{R_1(K,L,M)}{R^{*}(K,L,M)} \leq 3.
\label{eq:4gap}
 \end{equation}
\end{thm}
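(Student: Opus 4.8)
The plan is to compare the achievable rate $R_1$ of Theorem~\ref{thm1} against the information-theoretic converse of Theorem~\ref{th:conv}, term by term. The lower bound $R_1/R^{*}\geq 1$ is immediate: since $R_1(K,L,M)$ is achievable, the infimum $R^{*}(K,L,M)$ cannot exceed it, so $R^{*}\leq R_1$. All the real work is therefore in the three upper bounds, and the common engine will be a single combinatorial estimate on $\binom{K}{M}$.

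For the divisibility case \eqref{eq:gap1}, I would first observe that when $(L+M-1)\mid K$ the ceiling in \eqref{eq:thm1} is exact, so $\lceil K/(L+M-1)\rceil = K/(L+M-1)$ and hence $R_1 = LK/(L+M-1)$; this quantity is at most $K$ since $M\geq 1$, so the outer minimum in \eqref{eq:thm1} is inactive. Dividing by the converse bound $R^{*}\geq \frac{LK}{L+M+\log_q\binom{K}{M}}$ and cancelling the common factor $LK$ gives $R_1/R^{*}\leq \frac{L+M+\log_q\binom{K}{M}}{L+M-1}$, which rearranges directly into $1+\frac{\log_q\binom{K}{M}+1}{L+M-1}$, establishing \eqref{eq:gap1} for any field size.

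To pass from \eqref{eq:gap1} to the factor-two bound \eqref{eq:2gap}, the key estimate is the standard binomial inequality $\binom{K}{M}\leq (eK/M)^{M}$, whence $\log_q\binom{K}{M}\leq M\log_q(eK/M)$. The hypothesis $q\geq eK/M$ forces $\log_q(eK/M)\leq 1$, so $\log_q\binom{K}{M}\leq M$. Substituting into \eqref{eq:gap1} yields $\frac{\log_q\binom{K}{M}+1}{L+M-1}\leq \frac{M+1}{L+M-1}\leq 1$, the last inequality holding in the regime of interest $L\geq 2$, and hence $R_1/R^{*}\leq 2$.

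Finally, for the non-divisible case \eqref{eq:4gap} I would split the achievable rate additively rather than multiplicatively, using $\lceil K/(L+M-1)\rceil\leq K/(L+M-1)+1$ to write $R_1\leq \frac{LK}{L+M-1}+L$. The first summand is handled exactly as in the factor-two argument: the converse together with $\log_q\binom{K}{M}\leq M$ gives $\frac{LK}{L+M-1}\leq 2R^{*}$, while the trivial converse $R^{*}\geq L$ of \eqref{eq:conv1} absorbs the second summand, $L\leq R^{*}$. Adding these two bounds produces $R_1\leq 3R^{*}$, which is \eqref{eq:4gap}. The main subtlety throughout is the combinatorial estimate: obtaining a clean constant gap hinges entirely on $\log_q\binom{K}{M}\leq M$ under $q\geq eK/M$, and the reason the non-divisible case lands at $3$ rather than $4$ is precisely that the additive splitting lets the two distinct converse bounds, $\frac{LK}{L+M+\log_q\binom{K}{M}}$ and $L$, each pay for one of the two pieces of $R_1$, instead of inflating a single multiplicative gap.
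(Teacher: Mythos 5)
Your proposal is correct and follows essentially the same route as the paper's proof: comparing $R_0=\frac{LK}{L+M-1}$ against the converse of Theorem~\ref{th:conv}, invoking $\binom{K}{M}\leq\left(\frac{eK}{M}\right)^{M}$ with $q\geq \frac{eK}{M}$ to get $\log_q\binom{K}{M}\leq M$ and hence the factor $2$ (under $L\geq 2$), and in the non-divisible case bounding $R_1\leq \frac{LK}{L+M-1}+L$ and absorbing the additive $L$ via the trivial converse $R^{*}\geq L$ to land at $3$. Your added observation that the outer minimum in \eqref{eq:thm1} is inactive since $\frac{LK}{L+M-1}\leq K$ whenever $M\geq 1$ is a small point the paper leaves implicit, but otherwise the two arguments coincide step for step.
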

\begin{proof}
We let $R_0(K,L,M) = \frac{LK}{L+M-1}$.  From \eqref{eq:lb2ineq}, we have $R^{*}(K,L,M) \geq  \frac{LK}{L+M+\log_q \binom{K}{M}}$. Therefore,
\begin{equation}
    \frac{R_0(K,L,M)}{R^{*}(K,L,M)} \leq \frac{L+M+\log_q \binom{K}{M}}{L+M-1} = 1+\frac{\log_q \binom{K}{M}+1}{L+M-1}.
    \label{eq:ineq}
 \end{equation}
Since \textcolor{black}{$\binom{K}{M} \leq \big(\frac{eK}{M}\big)^M$ for all $1\leq M\leq K$},  
equation~\eqref{eq:ineq} can be written as
 \begin{subequations}
 \begin{align}
  \frac{R_0(K,L,M)}{R^{*}(K,L,M)} & \leq 1 + \frac{M\log_q \textcolor{black}{\big(\frac{eK}{M}\big)}
  +1 }{L+M-1} \label{eq:opgap2} \\
 & \leq 1 + \frac{M +1 }{L+M-1} \label{eq:opgap3}\\
 & \leq 2 \label{eq:opgap4}.
 \end{align}
 \label{eq:opgap}
\end{subequations}
The transition from  \eqref{eq:opgap2} to \eqref{eq:opgap3} is  because  \textcolor{black}{$\log_q\big(\frac{eK}{M}\big)\leq 1$ 
for $q \geq \frac{eK}{M}$},\footnote{By applying the approximation $\binom{K}{M} \leq K^M$, we obtain the condition $q \geq K$ on the field size $q$ to ensure the gap results in Theorem \ref{thm:gap}. In the case $M = 2$, this condition extends the applicability of the results compared to the requirement $q \geq eK/M$.} 
and \eqref{eq:opgap4}
follows from the fact that $\frac{M+1}{L+M-1} \leq 1$, if $L \geq 2$. Note that, from Theorem~\ref{thm1}, we have $R_1(K,L,M) = R_0(K,L,M)  = \frac{LK}{L+M-1}$ when $(L+M-1) \mid K$. Therefore, \textcolor{black}{from \eqref{eq:opgap}} we obtain
\begin{equation}
  R_1(K,L,M) \leq 2R^{*}(K,L,M).
\end{equation}
 In general, we have  
\begin{equation}
\begin {aligned}
   R_1(K,L,M) & \leq L\left(\frac{K}{L+M-1}+1\right) = \frac{LK}{L+M-1} +L \\
   & \leq 2R^{*}(K,L,M) + L .
   \label{eq:optgap}
\end{aligned}
\end{equation}
 Equation \eqref{eq:optgap} follows from 
 \eqref{eq:ineq} and \eqref{eq:opgap}. Since $R^{*}(K,L,M) \geq L$, equation \eqref{eq:optgap} becomes
 \begin{equation*}
    R_1(K,L,M) \leq 3R^{*}(K,L,M).
 \end{equation*}
 This completes the proof of Theorem~\ref{thm:gap}.
\end{proof}

We now derive another lower bound on the rate of a general \( (K,L,M)\) distributed linearly separable function computation problem. This lower bound establishes a connection between the distributed computing problem and covering designs. Using the derived lower bound, we show exact optimality results for certain parameter settings.

\subsection{Converse Derived from Multi-Level Covering Designs}
\label{section:covering}
We proceed to introduce a new combinatorial design that will help us build our converse. 
\begin{defn}[Multi-Level Covering Designs]
\label{defn:gcd}
    Given positive integers $v, k,$ and $m$ where $v\geq k $ and $v \geq m $, we define a $(v, k, m)$ multi-level covering design to be a pair $(X,\mathcal{B})$ where $X$ is a set of $v$ elements (called points) and $\mathcal{B}=\{\mathcal{B}_1,\mathcal{B}_2,\dots,\mathcal{B}_\Omega\}$ a multiset of subsets of $X$ of size at most $k$ (called blocks) such that, for all \(m'\leq m\), every $m'$-subset of $X$ intersects at least $m'$ blocks of $\mathcal{B}$. In other words, $(X,\mathcal{B})$ is a multi-level covering design if the condition
    \begin{equation}
    \label{eq:GCD}
    |\left\{\omega \in [\Omega]:\mathcal{T}\cap \mathcal{B}_\omega\neq \varnothing\right\}|\geq m'
    \end{equation}
    holds for every \(\mathcal{T}\subseteq X\) such that \(|\mathcal{T}|=m'\leq m\). 
    A $(v, k, m)$ multi-level covering design $(X,\mathcal{B})$ is said to be optimal if:
\begin{equation}
    |\mathcal{B}| = \min\{|\mathcal{A}| : \text{ there exists a } (v, k, m) \text{ multi-level covering design } (X, \mathcal{A})\}.
\end{equation}
In this case, the cardinality of $\mathcal{B}$ is called the multi-level covering number and is denoted by $C(v, k, m)$. 
\end{defn}

We now proceed with our new converse. 
\begin{thm}
    \label{thmcovering}
        For the $(K,L,M)$ distributed linearly separable function computation problem with \(q> K\), we have 
    \begin{equation}
        R^*(K,L,M) \geq C(K, M, L) \label{combinatorialConverse}
    \end{equation}
    where $C(K, M, L)$ is the multi-level covering number with parameters \(v=K,k=M\), and \(m=L\).
\end{thm}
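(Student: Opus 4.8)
The plan is to work in the matrix-factorization picture of \eqref{eq:TransDCA} and to exhibit a single worst-case demand matrix for which every admissible sparse factorization forces the supports of the rows of $\mathbf{A}$ to constitute a $(K,M,L)$ multi-level covering design. Because $R^*(K,L,M)$ is a worst-case quantity over all full-rank demand matrices, it is lower bounded by the optimal rate needed for any one fixed demand; hence it suffices to find a single $\mathbf{D}\in\mathbb{F}_q^{L\times K}$ for which every decomposition $\mathbf{D}=\mathbf{C}\mathbf{A}$ with $M$-sparse rows of $\mathbf{A}$ obeys $R\geq C(K,M,L)$. Using the hypothesis $q>K$, I would take $\mathbf{D}$ to be a Vandermonde (Reed--Solomon) matrix built from $K$ distinct evaluation points in $\mathbb{F}_q$. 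By the nonvanishing of Vandermonde minors, every set of at most $L$ columns of this $\mathbf{D}$ is linearly independent, so that $\text{rank}_q(\mathbf{D}_{\mathcal{T}})=m'$ for every $\mathcal{T}\subseteq[K]$ with $|\mathcal{T}|=m'\leq L$. This full MDS property, rather than the mere rank condition $\text{rank}_q(\mathbf{D})=L$, is the engine of the proof, and is exactly where $q>K$ is consumed.

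Next I would fix any admissible factorization $\mathbf{D}=\mathbf{C}\mathbf{A}$ with $\mathbf{A}\in\mathbb{F}_q^{R\times K}$ and $\|\mathbf{A}(r,:)\|_0\leq M$ for all $r$, and associate to it the block multiset $\mathcal{B}=\{\mathcal{B}_r\}_{r\in[R]}$ defined by $\mathcal{B}_r=\mathrm{supp}(\mathbf{A}(r,:))\subseteq[K]$, noting $|\mathcal{B}_r|\leq M$. Fixing any $\mathcal{T}\subseteq[K]$ with $|\mathcal{T}|=m'\leq L$, I let $\mathcal{S}=\{r\in[R]:\mathcal{B}_r\cap\mathcal{T}\neq\varnothing\}$ collect the rows whose support meets $\mathcal{T}$. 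For every $j\in\mathcal{T}$ and every $r\notin\mathcal{S}$ the entry $\mathbf{A}(r,j)$ vanishes, so all nonzero entries of the column-submatrix $\mathbf{A}_{\mathcal{T}}$ lie in the rows indexed by $\mathcal{S}$, giving $\text{rank}_q(\mathbf{A}_{\mathcal{T}})\leq|\mathcal{S}|$. Combining this with $\mathbf{D}_{\mathcal{T}}=\mathbf{C}\mathbf{A}_{\mathcal{T}}$ and the inequality $\text{rank}_q(\mathbf{C}\mathbf{A}_{\mathcal{T}})\leq\text{rank}_q(\mathbf{A}_{\mathcal{T}})$ yields
\[
m'=\text{rank}_q(\mathbf{D}_{\mathcal{T}})=\text{rank}_q(\mathbf{C}\mathbf{A}_{\mathcal{T}})\leq\text{rank}_q(\mathbf{A}_{\mathcal{T}})\leq|\mathcal{S}|,
\]
which is exactly the defining inequality \eqref{eq:GCD}: the $m'$-set $\mathcal{T}$ intersects at least $m'$ blocks. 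Since $\mathcal{T}$ and $m'\leq L$ were arbitrary, the pair $([K],\mathcal{B})$ satisfies Definition~\ref{defn:gcd} and is a $(K,M,L)$ multi-level covering design.

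By the definition of the multi-level covering number, the number of blocks must satisfy $R=|\mathcal{B}|\geq C(K,M,L)$. As this lower bound holds for every admissible factorization of the chosen MDS matrix, the optimal rate for that demand is at least $C(K,M,L)$, and therefore $R^*(K,L,M)\geq C(K,M,L)$, as claimed.

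The routine ingredients here are the rank chain and the support-to-block bookkeeping; the single delicate point is the construction of the demand matrix. The argument fails outright unless $\mathbf{D}$ is chosen so that \emph{every} $m'$-subset of columns with $m'\leq L$ is independent, and ensuring this over $\mathbb{F}_q$ is precisely what forces $q>K$ (enough distinct evaluation points for the Vandermonde matrix). I would also remark that this same construction is available over the reals and complexes, which is why the bound, as noted in Corollary~\ref{cor:RealField}, carries over to those fields.
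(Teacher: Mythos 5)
Your proposal is correct and follows essentially the same route as the paper's proof: both fix a worst-case demand matrix in which every set of at most $L$ columns is linearly independent (the paper invokes MDS-code existence for $q>K$, you instantiate it with a Vandermonde matrix, which is exactly the explicit construction the paper mentions in Remark~\ref{rem:MDS}), identify the blocks with the supports of the rows of $\mathbf{A}$ in the factorization $\mathbf{D}=\mathbf{C}\mathbf{A}$, and derive the covering condition \eqref{eq:GCD} from the rank of the column submatrices. Your explicit rank chain $m'=\text{rank}_q(\mathbf{D}_{\mathcal{T}})\leq\text{rank}_q(\mathbf{A}_{\mathcal{T}})\leq|\mathcal{S}|$ is in fact a slightly cleaner rendering of the paper's step, which states the needed rank condition on $\mathbf{A}_{\mathcal{L}}$ somewhat more loosely before relaxing it to the intersection count.
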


Before proceeding with the proof, a small remark is in order. 
\begin{rem}
The closest design to our own, is the well-known $t-(v,k,m,\lambda)$ general covering design (cf.~\cite{Fed}), after setting $t=1$ and $\lambda = m$.  The observant reader may notice though that in the classical definition~\cite{Fed}, the intersection condition is imposed only on 
$m$-subsets of $X$, with a fixed requirement on the level of intersection. In contrast, our definition requires the 
condition to hold for all $m' \leq m$, with the intersection threshold changing with $m'$. Consequently, 
the multi-level covering number $C(v,k,m)$ in our setting is never smaller than the classical 
general covering number $C_{\mathrm{GCD}}(v,k,m)$.\footnote{Indeed, every $(v,k,m)$ 
multi-level covering design is also a $(v,k,m)$ general covering design, since 
the case $m'=m$ is included. The reverse need not hold, hence 
$C(v,k,m) \geq C_{\mathrm{GCD}}(v,k,m)$.}  Hence we can conclude that the bound \eqref{combinatorialConverse} 
remains valid under either definition, since our multi-level covering number is always larger. The new definition can provide for a potentially tighter converse.
\end{rem}
We proceed with the proof.
\begin{proof}
    Let \(\mathbf{D}\in \mathbb{F}_q^{L\times K}\) be the demand matrix. Since we are interested in the worst-case communication cost, we consider \(\mathbf{D}\) to be a matrix with the property that every submatrix of \(\mathbf{D}\) of size \(L\times L\) is invertible. We know from the literature on MDS codes, that the existence of such a matrix for a given \(K\) and any \(L\leq K\) requires the field size \(q> K\) \cite{MSl,Ful}. Let \(R\) be the total number of transmissions made by the servers. Note that \(R\) need not be the number of servers, since for \(r\neq r'\), it is possible to have \(\mathcal{M}_r=\mathcal{M}_{r'}\). For every \(r\in [R]\), we have the transmission\footnote{With a slight abuse of notation, we denote the \(R\) transmissions received at the user with \(x_1,x_2,\dots x_R\) instead of the notation \(x_{n,r_n}\) used in \eqref{eq:xnr}. This mapping can be captured by defining a bijection from the set of possible \((n,r_n)\) pairs to \([R]\). } 
    \begin{equation}
        \label{eq:MatA}
        {x}_r = \sum_{   {\substack{j\in \mathcal{M}_r\\ |\mathcal{M}_r|\leq M}}}\alpha_{r,j} f_j(W_j)  
    \end{equation}
    for \(\alpha_{r,j}\in \mathbb{F}_q\), where \(\mathcal{M}_r\subseteq [K], \ |\mathcal{M}_r|\leq M\) is the support of the transmission vector \({x}_r\), \(r\in [R]\).\footnote{Note that the set $\{\mathcal{M}_1,\mathcal{M}_2,\dots,\mathcal{M}_R\}$ differs slightly from the task assignment set used previously.}  The user linearly combines the transmissions \({x}_r\), \(r\in [R]\), to obtain the demanded functions. 
    Recalling the decomposition of \(\mathbf{D}\) into transmission and decoding matrices in \eqref{eq:trans1} and \eqref{eq:dec}, we write
    \begin{align}
        \label{eq:D=CA}
        \begin{bmatrix}
            F_1(\mathcal{W})\\F_2(\mathcal{W})\\ \vdots\\F_L(\mathcal{W})
        \end{bmatrix} &= \underbrace{\begin{bmatrix}
            d_{1,1} & d_{1,2} &\dots &d_{1,K}\\
            d_{2,1} & d_{2,2} &\dots &d_{2,K}\\
            \vdots & \vdots &\ddots &\vdots \\
            d_{L,1} & d_{L,2} &\dots &d_{L,K}
        \end{bmatrix}}_{\mathbf{D}}\begin{bmatrix}
            f_1(W_1)\\f_2(W_2)\\ \vdots\\f_K(W_K)
        \end{bmatrix}\notag\\&= \underbrace{\begin{bmatrix}
            c_{1,1} & c_{1,2} &\dots &c_{1,R}\\
            c_{2,1} & c_{2,2} &\dots &c_{2,R}\\
            \vdots & \vdots &\ddots &\vdots \\
            c_{L,1} & c_{L,2} &\dots &c_{L,R}
        \end{bmatrix}}_{\mathbf{C}}\underbrace{\begin{bmatrix}
            \alpha_{1,1} & \alpha_{1,2} &\dots &\alpha_{1,K}\\
            \alpha_{2,1} & \alpha_{2,2} &\dots &\alpha_{2,K}\\
            \vdots & \vdots &\ddots &\vdots \\
            \alpha_{R,1} & \alpha_{R,2} &\dots &\alpha_{R,K}
        \end{bmatrix}}_{\mathbf{A}}\begin{bmatrix}
            f_1(W_1)\\f_2(W_2)\\ \vdots\\f_K(W_K)
        \end{bmatrix}
    \end{align}
    to get
    \begin{equation}
        \label{eq:D=CA2}
        \mathbf{D}=\mathbf{C}\mathbf{A}.
    \end{equation}
    Note that the positions of nonzero entries in the \(r\)-th row of \(\mathbf{A}\) are determined by \(\mathcal{M}_r\). Consider a matrix \(\mathbf{D}_{\mathcal{L}}\) formed by choosing the columns of \(\mathbf{D}\) indexed with the elements of the set \(\mathcal{L}\subseteq [K]\), where \(|\mathcal{L}|=L'\leq L\). From the matrix decomposition in \eqref{eq:D=CA2}, we have
    \begin{equation}
        \label{eq:DL}
        \mathbf{D}_{\mathcal{L}} = \mathbf{C}\mathbf{A}_\mathcal{L}
    \end{equation}
    where \(\mathbf{A}_{\mathcal{L}}\) is formed by choosing the columns of \(\mathbf{A}\) indexed with the elements of the set \(\mathcal{L}\).
    From our earlier assumption on the invertibility of the submatrices of \(\mathbf{D}\), we have \(\text{rank}_q(\mathbf{D}_{\mathcal{L}})=L'\) for any \(\mathcal{L}\subseteq [K]\) with \(|\mathcal{L}|=L'\leq L\). Therefore, we require \(\text{rank}_q(\mathbf{A}_\mathcal{L})=L'\). Further relaxing the requirement, we need a minimum of \(L'\) rows in \(\mathbf{A}_\mathcal{L}\) with at least a non-zero entry. In other words, we need
    \begin{equation}
        \label{eq:condition}
        |\{r:\mathcal{L}\cap \mathcal{M}_r\neq \varnothing\}|\geq L'.
    \end{equation}
    This condition must be true for every \(\mathcal{L}\subseteq [K]\) with \(|\mathcal{L}|\leq L\). Comparing \eqref{eq:condition} with \eqref{eq:GCD}, we get that in order to have a matrix decomposition as in \eqref{eq:D=CA}, the pair \(\left([K],\mathcal{M}\right)\) requires to be a \((K,M,L)\) multi-level covering design, where \(\mathcal{M}=\{\mathcal{M}_1,\mathcal{M}_2,\dots,\mathcal{M}_R\}\). However, the number of blocks in the design
    \begin{equation*}
        |\mathcal{M}|=R\geq C(K,M,L)
    \end{equation*}
    where 
    \begin{equation*}
        C(K,M,L)= \min\{|\mathcal{M}| : \text{ there exists a } (K, M, L) \text{ multi-level covering design } ([K], \mathcal{M})\}
    \end{equation*}
    is the multi-level covering number. Therefore, the rate incurred by any scheme is lower bounded by 
    \begin{equation*}
        R^*(K,L,M) \geq C(K,M,L)
    \end{equation*}
    if \(q> K\). This completes the proof of Theorem~\ref{thmcovering}.
\end{proof}

\textcolor{black}{\begin{rem}
    \label{rem:MDS}
 The proof of Theorem~\ref{thmcovering} does not strictly require the condition \(q > K\). Rather, for given \(K,L,\) and \(q\), it requires the existence of at least one matrix \(\mathbf{D} \in \mathbb{F}_q^{L \times K}\) such that every \(L \times L\) submatrix of \(\mathbf{D}\) has full rank. For instance, when \(K = L + 1\), such a matrix exists even over \(\mathbb{F}_2\). As one may notice, this is the same property attributed to the generator matrix of a maximum distance separable (MDS) code \cite{MSl}. The assumption \(q > K\) is often imposed because it guarantees the existence of simple explicit constructions, such as Vandermonde matrices, which possess the required full-rank property \cite{HoJ,Ful}.
\end{rem}
\begin{rem}
    \label{rem:generalcovering}
 A closer inspection of the proof of Theorem~\ref{thmcovering} shows that any achievable scheme -- i.e., any scheme that can successfully treat all full-rank demand matrices -- for the \((K,L,M)\) distributed linearly separable function computation problem (when \(q > K\)) must rely on an underlying multi-level covering design structure. To be more specific, such a scheme can exist only if a $(K,M,L)-$multi-level covering design exists. If such a design did not exist, no scheme would be able to resolve the problem for any demand matrix \(\mathbf{D}\) for which every \(L\times L\) submatrix is invertible\footnote{It is worth recalling that when considering the real (or the complex) number field, such matrices appear with probability 1.}. 
 To elaborate on this point, we note that the task assignments used in Theorem~\ref{thm1} and Theorem~\ref{thm2} naturally give rise to such designs:\\
    -- The task assignment \(\mathcal{M}^\star\) in Theorem~\ref{thm1}, given in \eqref{eq:dataset1}, forms the block set of a \((K,M,L)\) multi-level covering design \(([K],\mathcal{M}^\star)\), where 
\(
    |\mathcal{M}^\star| = L  \Big\lceil \frac{K}{L+M-1} \Big\rceil
\).\footnote{The column partition can produce submatrices of non-uniform size, which may result 
in some blocks having fewer than \(M\) elements. Nevertheless, condition \eqref{eq:GCD} 
is satisfied for every subset of \([K]\) of size at most \(L\).} 
Note that the number of blocks in this design corresponds to the achievable rate of the scheme.\\
    -- Unlike Theorem~\ref{thm1}, the multi-level covering design arising from Theorem~\ref{thm2} may contain repeated blocks. Consider the task assignment \(\mathcal{M} = \{\mathcal{M}_1, \mathcal{M}_2, \dots, \mathcal{M}_N\}\) in Theorem~\ref{thm2}, given in \eqref{eq:datascheme2}. The multiplicity of each block \(\mathcal{M}_n\), for \(n \in [N]\), is equal to the number of transmissions made by server \(n\). Thus, the corresponding \((K,M,L)\) multi-level covering design is \(([K], \mathcal{M}^\star)\), where \(\mathcal{M}^\star\) is the multiset obtained from \(\mathcal{M}\) by allowing repetitions, and \(|\mathcal{M}^\star| = L + \Big\lceil \frac{L}{\lfloor M / (K-M) \rfloor} \Big\rceil\).
\end{rem}}

It can be readily verified that \(C_{\mathrm{GCD}}(K,M=1,L)=K\) and \(C_{\mathrm{GCD}}(K,M,L=1)=\ceil{K/M}\). Note that, for these cases, the multi-level covering number coincides with the general covering number. i.e., \(C(K,M=1,L)=K\) and \(C(K,M,L=1)=\ceil{K/M}\). For any other values of \(L\) and \(M\), the general covering number is not known. Finding the general covering number is a hard problem, in general. Different variants of the covering design problem are studied in the literature \cite{Fed,CMR,CoD,Fed1}. For most of those problems, closed-form solutions are not available, and existing solutions \textcolor{black}{primarily rely on} probabilistic algorithms and extensive computer searches. However, using combinatorial arguments, we find a lower bound on the multi-level covering number for the specific case \(L=2\), and interestingly, the lower bound matches the rate achieved by our scheme in Theorem~\ref{thm1} when \((M+1)|K\). 
\begin{thm}
    \label{thm4}
    For the $(K,L=2,M)$ distributed linearly separable function computation problem, we have
    \begin{equation}
        R^*(K,L=2,M) = \frac{2K}{M+1}.
    \end{equation}
    if $(M+1)|K$.
\end{thm}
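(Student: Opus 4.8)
The plan is to prove the stated equality by sandwiching $R^*(K,2,M)$ between a matching achievable rate and a combinatorial converse. For achievability I would simply specialize Theorem~\ref{thm1} to $L=2$: since $(M+1)\mid K$ we have $\lceil K/(M+1)\rceil = K/(M+1)$, and since $M\ge 1$ forces $2K/(M+1)\le K$, the general rate collapses to $R_1(K,2,M)=2K/(M+1)$. This gives $R^*(K,2,M)\le 2K/(M+1)$ for free, so the entire burden of the theorem lies on the converse.

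For the converse I would invoke Theorem~\ref{thmcovering}, which yields $R^*(K,2,M)\ge C(K,M,2)$, thereby reducing the problem to lower bounding the multi-level covering number $C(K,M,2)$. Writing such a design as $([K],\mathcal{B})$ with $\mathcal{B}=\{\mathcal{B}_1,\dots,\mathcal{B}_\Omega\}$ and $|\mathcal{B}_\omega|\le M$, and letting $\deg(x)$ denote the number of blocks containing a point $x$, the defining conditions of Definition~\ref{defn:gcd} for $m=L=2$ read as follows: the level-$1$ requirement forces $\deg(x)\ge 1$ for every $x\in[K]$, while the level-$2$ requirement forces every pair $\{x,y\}$ to intersect at least two blocks. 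The key structural fact I would extract from the level-$2$ condition is that \emph{each block contains at most one point of degree one}: if a block $\mathcal{B}$ contained two distinct degree-$1$ points $x,y$, then $x$ and $y$ would each appear only in $\mathcal{B}$, so the pair $\{x,y\}$ would meet exactly one block, contradicting the level-$2$ requirement. Consequently the set $V_1$ of degree-$1$ points lies in distinct blocks, whence $|V_1|\le \Omega$.

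With this observation the converse follows by a double count of incidences. On one hand, $\sum_\omega |\mathcal{B}_\omega|\le M\Omega$ by the block-size constraint; on the other hand, since every point has degree at least $1$ and every point outside $V_1$ has degree at least $2$, $\sum_x \deg(x)\ge |V_1| + 2(K-|V_1|) = 2K - |V_1| \ge 2K - \Omega$. Equating these through the incidence identity $\sum_\omega |\mathcal{B}_\omega| = \sum_x \deg(x)$ gives $(M+1)\Omega \ge 2K$, i.e. $C(K,M,2)\ge 2K/(M+1)$. Combining with the achievability bound closes the gap and establishes $R^*(K,2,M)=2K/(M+1)$, where the divisibility hypothesis is exactly what makes $2\lceil K/(M+1)\rceil$ equal the integer $2K/(M+1)$ so that the two bounds coincide.

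I expect the main obstacle to be isolating and justifying the ``at most one degree-$1$ point per block'' lemma and recognizing it as the tight combinatorial consequence of the level-$2$ covering requirement; once that lemma is in place, the remaining incidence count is routine. A secondary point worth flagging explicitly in the write-up is that the converse direction inherits the field-size hypothesis $q>K$ of Theorem~\ref{thmcovering} (needed to guarantee a worst-case demand matrix whose every $L\times L$ submatrix is invertible), whereas the achievability of Theorem~\ref{thm1} is field-agnostic, so the stated equality holds in the regime where both directions are simultaneously valid.
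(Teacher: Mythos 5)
Your proposal is correct and follows essentially the same route as the paper: achievability by specializing Theorem~\ref{thm1} under $(M+1)\mid K$, and a converse via Theorem~\ref{thmcovering} in which the level-$2$ covering condition forces the degree-$1$ points into pairwise distinct blocks (your ``at most one degree-$1$ point per block'' lemma is exactly the paper's inequality $\Omega \geq \kappa_1$), combined with the identical incidence double count $\sum_{\omega}|\mathcal{B}_\omega|\leq M\Omega$ and $\sum_x \deg(x)\geq 2K-\kappa_1$ to conclude $(M+1)\Omega\geq 2K$. Your closing remark that the converse inherits the field-size hypothesis $q>K$ from Theorem~\ref{thmcovering} is a fair point of precision that the paper leaves implicit in the statement of the theorem (and relaxes only in a subsequent remark).
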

\begin{proof}
    When \((M+1)|K\), the achievability of the rate \(R_1(K,L=2,M) = \frac{2K}{M+1}\) follows from Theorem~\ref{thm1}. In order to show the converse, we use Theorem~\ref{thmcovering}, and prove that \(C(K,M,2)= \frac{2K}{M+1}\).
    Let \(([K],\mathcal{B})\) be a \((K,M,L)\) multi-level covering design. Let \(\mathcal{B}=\{\mathcal{B}_1,\mathcal{B}_2,\dots,\mathcal{B}_\Omega\}\) where \(\Omega\) is the number of blocks in the design, and for every \(\omega\in [\Omega]\), \(|\mathcal{B}_\omega|\leq M\). For every \(k\in [K]\), we now define
    \begin{equation}
        \label{eq:B_k}
        \mathcal{B}^{(k)} = \{\omega:\mathcal{B}_\omega\ni k\} 
    \end{equation}
    where \(\mathcal{B}^{(k)}\) is the set of indices of the blocks that contain the point \(k\). Further, for every \(\omega\in [\Omega]\), we define
    \begin{equation}
    \label{eq:K_q}
         \mathcal{K}_\omega = \{k:|\mathcal{B}^{(k)}|=\omega\}
    \end{equation}
    where \(\mathcal{K}_\omega\) is the set of points that appear exactly in \(\omega\) blocks. We now denote \(\kappa_\omega = |\mathcal{K}_\omega|\). Since each point should be present in at least one block (follows from the definition of multi-level covering design with \(m'=1\)), the collection \(\{\mathcal{K}_1,\mathcal{K}_2,\dots,\mathcal{K}_\Omega\}\) is clearly a partition of \([K]\). Therefore, we have
    \begin{equation}
    \label{eq:kappa1}\kappa_1+\kappa_2+\dots+\kappa_\Omega=K.
    \end{equation}
    Also, we have    
    \begin{equation}
    \label{eq:kappa2}\kappa_1+2\kappa_2+\dots+\Omega\kappa_\Omega\leq M\Omega
    \end{equation}
    since there are \(\Omega\) blocks\textcolor{black}{, each} of size at most \(M\).
    
    Now, consider two points \(k_1,k_2\in \mathcal{K}_1\). From the definition of multi-level covering design, for the set \(\{k_1,k_2\}\), we have
    \[|\left\{\omega:\{k_1,k_2\}\cap \mathcal{B}_\omega\neq \varnothing\right\}|\geq 2.\]
    Therefore, \(k_1\) and \(k_2\) must be in two distinct blocks. In general, no two points in \(\mathcal{K}_1\) can be present in the same block. Thus, we get
    \begin{equation}
    \label{eq:kappa3}
        \Omega\geq \kappa_1.
    \end{equation}
    We now find a lower bound on \(\kappa_1\) to get a lower bound on \(\Omega\). From \eqref{eq:kappa1} and \eqref{eq:kappa2}, we get
    \begin{align*}
    M\Omega&\geq \kappa_1+2\kappa_2+\dots+\Omega\kappa_\Omega\\
    &\geq \kappa_1+2(\kappa_2+\dots+\kappa_\Omega)\\
    & = \kappa_1+2(K-\kappa_1)\\
    & = 2K-\kappa_1.
    \end{align*}
    By rearranging the inequality, we get
    \begin{equation}
    \label{eq:kappa4}
        \kappa_1\geq 2K-M\Omega.
    \end{equation}
    Finally, by combining \eqref{eq:kappa3} and \eqref{eq:kappa4}, 
    we get the required lower bound on \(\Omega\) as
    \[\Omega\geq \frac{2K}{M+1}.\]
    Therefore, we can conclude that 
    \begin{equation*}
        R^*(K,L=2,M) = C(K,M,2) = \frac{2K}{M+1}
    \end{equation*}
    if \((M+1)|K\).
\end{proof}
\textcolor{black}{
We complete this part with a small corollary that extends some of the results to the case of the real and complex fields.  
\begin{cor}
    \label{cor:RealField}
    The achievability results of Theorem~\ref{thm1} and Theorem~\ref{thm2}, as well as the converse in Theorem~\ref{thmcovering}, remain valid even when our setting considers computations and communication over the real and complex fields. 
\end{cor}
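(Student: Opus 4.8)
The plan is to argue that \emph{none} of the three results named in the corollary uses any property specific to finite fields: each rests solely on field-independent linear algebra and combinatorics. The single field-sensitive ingredient is the existence of a demand matrix whose every $L\times L$ submatrix is invertible, and I would dispatch this separately over $\mathbb{R}$ and $\mathbb{C}$. I would also be explicit about what is \emph{not} claimed, namely the counting-based bound of Theorem~\ref{th:conv}.

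For the achievability side, I would audit the two proofs and flag exactly which algebraic facts they invoke. Lemma~\ref{lemma1} uses only: (i) the rank--nullity theorem, which guarantees that the $L\times(L-1)$ matrix $\mathbf{D}_{\mathcal{M}_n^c}$ has a nontrivial left nullspace and that $\mathcal{N}(\cdot)$ returns a basis; (ii) the invertibility of the full-rank square matrix $\tilde{\mathbf{N}}_{\mathbf{D},\mathcal{M}}$ used in the decoding step \eqref{eq:Decode}; and (iii) the linear-dependence/contradiction argument establishing $\text{rank}(\mathbf{N}_{\mathbf{D},\mathcal{M}^\star})=L$, which manipulates only linear combinations and the invertibility of $\mathbf{D}_\mathcal{L}$. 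All three hold verbatim over any field. The column-wise partition of Case~2 and the row-wise partition in Theorem~\ref{thm2} are purely combinatorial. Finally, the target matrix $\mathbf{T}$ of \eqref{eq:Target} is upper triangular with unit diagonal, hence has determinant $1$ and is full rank over every field, while the augmentation $\tilde{\mathbf{d}}_n=-\mathbf{D}_{\mathcal{P}_n}(n,:)$ and the resulting nullspace identities are field-agnostic. Hence both schemes and their rates transfer unchanged to $\mathbb{R}$ and $\mathbb{C}$.

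For the converse, I would isolate the one place where $q>K$ entered the proof of Theorem~\ref{thmcovering}: it guaranteed a demand matrix $\mathbf{D}$ all of whose $L\times L$ submatrices are invertible. Over $\mathbb{R}$ (or $\mathbb{C}$) such a matrix exists explicitly---take the $L\times K$ Vandermonde matrix $\mathbf{D}(\ell,k)=x_k^{\ell-1}$ with distinct nodes $x_1,\dots,x_K$, since any $L$ of its columns form an $L\times L$ Vandermonde matrix with distinct nodes and thus nonzero determinant. (Equivalently, the locus where some $L\times L$ minor vanishes is a proper algebraic subvariety, so a generic real or complex $\mathbf{D}$ has the property with probability one, as noted in the footnote of Remark~\ref{rem:generalcovering}.) Once such a $\mathbf{D}$ is fixed, the remaining chain---passing from $\text{rank}(\mathbf{A}_\mathcal{L})=L'$ to the intersection condition \eqref{eq:condition} and identifying it with the multi-level covering condition \eqref{eq:GCD}---is linear-algebraic and combinatorial with no reference to field size, so \eqref{combinatorialConverse} continues to hold over $\mathbb{R}$ and $\mathbb{C}$.

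I expect the only genuinely delicate point to be pinning down the \emph{scope}: the corollary deliberately omits Theorem~\ref{th:conv}, because its bound \eqref{eq:lbineq} counts the $\binom{K}{M}q^M$ possible $M$-sparse rows and measures entropy in base $q$, both meaningless over an infinite field. By contrast, Theorem~\ref{thmcovering} never counts field elements---it counts only blocks of a design---so it survives the passage to $\mathbb{R}$ and $\mathbb{C}$. The residual work is the routine verification, by direct inspection of the two achievability proofs and the covering converse, that no intermediate step silently assumed finiteness; I would present this as the verification outlined above rather than a fresh argument.
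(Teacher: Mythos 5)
Your proposal is correct and follows essentially the same route as the paper's own proof: verify that the nullspace-based achievability arguments (linear independence, invertibility of $\tilde{\mathbf{N}}_{\mathbf{D},\mathcal{M}}$, the target matrix $\mathbf{T}$, and the partitioning steps) are field-agnostic, and note that the converse of Theorem~\ref{thmcovering} needs only the existence over $\mathbb{R}$ or $\mathbb{C}$ of a demand matrix with all $L\times L$ submatrices invertible together with the factorization \eqref{eq:D=CA2}, which holds over any field. Your added details -- the explicit Vandermonde construction (which the paper itself hints at in Remark~\ref{rem:MDS}) and the observation that Theorem~\ref{th:conv} is deliberately excluded because its counting and base-$q$ entropy arguments require finiteness -- are faithful elaborations of the paper's argument rather than a different approach.
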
 
\begin{proof}
    In terms of achievable schemes, the proof is direct after observing that the nullspace-based argument in Lemma~\ref{lemma1} (see \eqref{eq:nullspace}) -- which relies on the notions of linear independence of vectors and the invertibility of the matrix \(\tilde{\mathbf{N}}_{\mathbf{D},\mathcal{M}}\) -- holds for demand matrices defined over any field. 
    Similarly, to conclude that the converse in Theorem~\ref{thmcovering} extends to the case of the field of real and complex numbers, it suffices to note that a) over the real or complex numbers it holds that for any \(L\) and \(K\), there exists a matrix \(\mathbf{D} \in \mathbb{R}^{L \times K}\) such that every \(L \times L\) submatrix of $\mathbf{D}$ has full rank \(L\), and b) by noting that the proof of Theorem~\ref{thmcovering} relies only on the matrix factorization in \eqref{eq:D=CA2}, which is valid even over the real and complex fields. 
\end{proof}    
}

\section{Tradeoff Between Rate and Number of Servers}
\label{RvsN}
The presented results until now considered the task of minimizing the communication cost $R$ without constraints on the number of servers. This current section places constraints on $N$, and thus explores the natural tradeoff between servers $N$ and rate $R$. 
     
\begin{thm}
    \label{thm6}
    For the $(K,L,M)$ distributed  linearly separable function computation problem, for every $\gamma\leq L$, the rate
    \begin{equation}
    \label{eq:thmtradeoff}
        R^{(\gamma)}(K,L,M) = \min\left\{K,L\left\lceil{\frac{K}{L+M-\gamma}}\right\rceil\right\}
    \end{equation}
    is achievable with
    \begin{equation}
    N^{(\gamma)}= \left\lceil{\frac{L}{\gamma}}\right\rceil\left\lceil{\frac{K}{L+M-\gamma}}\right\rceil
    \end{equation}
    servers. 
\end{thm}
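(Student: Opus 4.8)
The plan is to mirror the column-partitioning argument of Theorem~\ref{thm1} (Case~2), but with a more powerful per-block task assignment that lets each server transmit more than once. Concretely, I would partition the demand matrix column-wise into $\lceil K/K'\rceil$ sub-demand matrices of width $K' = L+M-\gamma$ (the last one possibly narrower), exactly as in \eqref{eq:KbyK'}, and handle each sub-demand matrix $\mathbf{D}_\nu \in \mathbb{F}_q^{L\times K'}$ independently. The only---but decisive---change relative to Theorem~\ref{thm1} is that within each block I would recover the $L$ partial functions using $\lceil L/\gamma\rceil$ servers, each making up to $\gamma$ transmissions, for a per-block cost of $L$ transmissions. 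Summing over the $\lceil K/(L+M-\gamma)\rceil$ blocks then yields the claimed rate $L\lceil K/(L+M-\gamma)\rceil$ and server count $\lceil L/\gamma\rceil\lceil K/(L+M-\gamma)\rceil$, while the outer $\min\{K,\cdot\}$ is supplied by the trivial one-subfunction-per-server scheme.

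The observation that powers the multiple-transmissions-per-server idea is a nullspace-dimension count. If, within a block, a server is assigned exactly $M$ subfunctions, then its complement index set has size $K'-M = (L+M-\gamma)-M = L-\gamma$, so the submatrix $\mathbf{D}_{\mathcal{M}_n^c}$ is of size $L\times(L-\gamma)$ and its left nullspace has dimension at least $L-(L-\gamma)=\gamma$. This is precisely the generalization of the Case~1 situation in Lemma~\ref{lemma1}, where the complement had size $L-1$ and the left nullspace was (at least) one-dimensional; here the server instead enjoys a $\gamma$-dimensional family of admissible transmission vectors. Note that the feasibility $L-\gamma\ge 0$ is exactly the hypothesis $\gamma\le L$.

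For the explicit construction I would proceed as in the achievability part of Lemma~\ref{lemma1}. Assuming (without loss of generality, via the rank-padding argument in the proof of Lemma~\ref{lemma1}) that $\mathbf{D}_\nu$ has full row rank, I would fix an invertible $L\times L$ submatrix $\mathbf{D}_\mathcal{L}$ with column index set $\mathcal{L}$, and partition $\mathcal{L}$ into $s=\lceil L/\gamma\rceil$ groups $T_1,\dots,T_s$ with $|T_i|\le\gamma$. I would assign to server $i$ the task set $\mathcal{M}_i = ([K']\setminus\mathcal{L})\cup T_i$, so that $\mathcal{M}_i^c = \mathcal{L}\setminus T_i$ and $|\mathcal{M}_i| = (M-\gamma)+|T_i|\le M$, meeting the budget. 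Since $\mathbf{D}_{\mathcal{L}\setminus T_i}$ has full column rank $L-|T_i|$, its left nullspace is exactly $|T_i|$-dimensional, and the map $\mathbf{v}\mapsto \mathbf{v}\mathbf{D}_\mathcal{L}$ sends this nullspace isomorphically onto the coordinate subspace supported on $T_i$. I would therefore choose server $i$'s $|T_i|$ transmission vectors so that their images under $\mathbf{D}_\mathcal{L}$ are the unit vectors indexed by $T_i$. Because the $T_i$ partition $\mathcal{L}$, the stacked transmission matrix $\mathbf{N}_{\mathbf{D}_\nu,\mathcal{M}}$ then satisfies $\mathbf{N}_{\mathbf{D}_\nu,\mathcal{M}}\,\mathbf{D}_\mathcal{L} = \mathbf{I}_L$ up to a permutation, so $\text{rank}_q(\mathbf{N}_{\mathbf{D}_\nu,\mathcal{M}})=L$ and Lemma~\ref{lemma1} (cf.~\eqref{eq:nullspace}) delivers the $L$ partial functions of that block with exactly $L$ transmissions spread over $s=\lceil L/\gamma\rceil$ servers.

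I expect the main obstacle to be the simultaneous bookkeeping that makes the construction consistent: the single identity $K'-M=L-\gamma$ must force the per-server computation budget ($|\mathcal{M}_i|\le M$) and the required nullspace dimension ($\ge\gamma$) to hold at once, and one must verify that the $\gamma$-at-a-time transmissions from the $\lceil L/\gamma\rceil$ servers genuinely aggregate to rank $L$ rather than collapsing---this is exactly what the partition of $\mathcal{L}$ together with the unit-vector image choice guarantees. The remaining routine points are handling the narrower last block and any rank-deficient $\mathbf{D}_\nu$ (both absorbed by the Lemma~\ref{lemma1} padding trick, which can only lower the cost), and confirming that the rate never needs to exceed $K$, which yields the $\min\{K,\cdot\}$.
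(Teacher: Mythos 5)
Your proposal is correct and follows essentially the same route as the paper's proof: the same column-wise partition into blocks of width $K'=L+M-\gamma$, the same task assignment obtained by partitioning an invertible column set $\mathcal{L}$ into groups of size at most $\gamma$ and adjoining $[K']\setminus\mathcal{L}$ to each (with the identical budget check $|\mathcal{M}_i|\le(M-\gamma)+\gamma=M$), and the same appeal to Lemma~\ref{lemma1}. The only difference is cosmetic: you verify $\text{rank}_q(\mathbf{N}_{\mathbf{D},\mathcal{M}^\star})=L$ constructively, by picking nullspace vectors whose images under $\mathbf{D}_\mathcal{L}$ are the unit vectors indexed by each group so that $\mathbf{N}_{\mathbf{D},\mathcal{M}^\star}\mathbf{D}_\mathcal{L}$ is a permutation matrix, whereas the paper establishes the same rank fact by a contradiction argument on a putative linear dependence across groups.
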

\begin{proof}
    The achievability of \(R^{(\gamma)}(K,L,M)=K\) is trivial, and therefore we consider the scenario \(L\left\lceil{\frac{K}{L+M-\gamma}}\right\rceil<K\). Similar to the proof of Theorem~\ref{thm1}, we first consider the case \(K=L+M-\gamma\) for some \(\gamma\in [L]\), and show the achievability of \(R^{(\gamma)}(K,L,M)=L\) for any given \(\mathbf{D}\in \mathbb{F}_q^{L\times K}\) using \(N=\ceil[\Big]{\frac{L}{\gamma}}\) servers. Note that the condition \(L\left\lceil{\frac{K}{L+M-\gamma}}\right\rceil< K\) ensures that \(M-\gamma>0\), and consequently we have \(\gamma<M\) and \(K>L\). Furthermore, without loss of generality, we assume that \(\text{rank}_q(\mathbf{D})=L\). If this is not the case, we can replace some \(L-\text{rank}_q(\mathbf{D})\) linearly dependent rows of \(\mathbf{D}\) with a different set of \(L-\text{rank}_q(\mathbf{D})\) vectors, so that the resulting matrix has full row rank. If the user can decode the new functions, the removed functions can still be recovered as linear combinations of the original \(\text{rank}_q(\mathbf{D})\) functions that were retained. From the full-row-rank matrix \(\mathbf{D}\), we find an invertible submatrix \(\mathbf{D}_\mathcal{L}\) of size \(L\times L\), where \(\mathcal{L}= \{i_1,i_2,\dots,i_L\}\) denotes the indices of columns of \(\mathbf{D}\) chosen to form \(\mathbf{D}_\mathcal{L}\). We now give an explicit construction of the task assignment \(\mathcal{M}^\star\), where
    \begin{equation*}
        \mathcal{M}^\star = \{\mathcal{M}_1^\star,\mathcal{M}_2^\star,\dots,\mathcal{M}_{\ceil{\frac{L}{\gamma}}}^\star\}.
    \end{equation*}
For every \(\theta\in \left[\ceil[\Big]{\frac{L}{\gamma}}-1\right]\), we define
     \begin{equation}
         \label{eq:datasetassignl}
         \mathcal{M}_\theta^\star = \mathcal{M}_{\mathcal{L},\theta}^\star \bigcup \left(\left[K\right]\backslash \mathcal{L}\right)
     \end{equation}
     and
     \begin{equation}
         \label{eq:datasetassignLbyj}
         \mathcal{M}_{\ceil{\frac{L}{\gamma}}}^\star = \mathcal{M}_{\mathcal{L},\ceil{\frac{L}{\gamma}}}^\star \bigcup \left(\left[K\right]\backslash \mathcal{L}\right)
     \end{equation}
     where 
     \[\mathcal{M}_{\mathcal{L},\theta}^\star = \left\{i_{(\theta-1)\gamma+1},i_{(\theta-1)\gamma+2},\dots,i_{\theta \gamma}\right\}\] and \[\mathcal{M}_{\mathcal{L},\ceil{\frac{L}{\gamma}}}^\star=\left\{i_{(\ceil{\frac{L}{\gamma}}-1)\gamma+1},i_{(\ceil{\frac{L}{\gamma}}-1)\gamma+2},\dots,i_L\right\}.\]
Note that, \(|\mathcal{M}_\theta^\star|=\gamma+K-L=M\), for every \(\theta\in \left[\ceil{\frac{L}{\gamma}}-1\right]\), and \(|\mathcal{M}_{\ceil{\frac{L}{\gamma}}}^\star|=L-(\ceil{\frac{L}{\gamma}}-1)\gamma+K-L\leq M\). Notice that the collection of sets \(\{\mathcal{M}_{\mathcal{L},\theta}^\star:\theta\in \left[\ceil[\Big]{\frac{L}{\gamma}}\right]\}\) is a partition of \(\mathcal{L}\).

To now prove the achievability of \(R^{(\gamma)}(K,L,M)=L\), it is sufficient to show that \(\text{rank}_q(\mathbf{N}_{\mathbf{D},\mathcal{M}^\star})=L\), where \(\mathbf{N}_{\mathbf{D},\mathcal{M}^\star}\) is defined as per \eqref{eq:NullMat}. 
Suppose the rank of \(\mathbf{N}_{\mathbf{D},\mathcal{M}^\star}\) is less than \(L\), then there exists a row \(\ell'\) (where \(i_{\ell'}\in \mathcal{M}_{\mathcal{L},\theta}^\star\) for some \(\theta\in \left[\ceil[\Big]{\frac{L}{\gamma}}\right]\)) of \(\mathbf{N}_{\mathbf{D},\mathcal{M}^\star}\) 
     \begin{equation}
     \label{eq:contranew0}
         \mathbf{N}_{\mathbf{D},\mathcal{M}^\star}(\ell',:) = \sum_{\ell=1,i_\ell\not\in \mathcal{M}_{\mathcal{L},\theta}^\star}^L \alpha_\ell \mathbf{N}_{\mathbf{D},\mathcal{M}^\star}(\ell,:)
     \end{equation}
     where not all \(\alpha_\ell\)-s are zeros. Note that, for every \(\ell,\ell'\) such that \(i_{\ell},i_{\ell'}\in \mathcal{M}_{\mathcal{L},\theta}^\star\) for any \(\theta\), the rows \(\mathbf{N}_{\mathbf{D},\mathcal{M}^\star}(\ell,:)\) and \(\mathbf{N}_{\mathbf{D},\mathcal{M}^\star}(\ell',:)\) are linearly independent by definition. Now, on the one hand, we have
     \begin{equation}
         \label{eq:contranew1}
          \mathbf{N}_{\mathbf{D},\mathcal{M}^\star}(\ell',:)\mathbf{D}_\mathcal{L} = \sum_{\ell:i_\ell\in \mathcal{M}_{\mathcal{L},\theta}^\star}\beta_{\ell}\mathbf{e}_{\ell}^\intercal
     \end{equation}
     where \(\beta_{\ell}\) is non-zero for at least one \(\ell\) such that \(i_\ell\in \mathcal{M}_{\mathcal{L},\theta}^\star\). It follows from the fact that \(\mathbf{N}_{\mathbf{D},\mathcal{M}^\star}(\ell',:)\) is a vector in the left nullspace of \(\mathbf{D}_{({\mathcal{M}_{\theta}^\star})^c} = \mathbf{D}_{\mathcal{L}\backslash \mathcal{M}_{\mathcal{L},\theta}^\star}\). On the other hand, from \eqref{eq:contranew0}, we have
     \begin{align}
         \mathbf{N}_{\mathbf{D},\mathcal{M}^\star}(\ell',:)\mathbf{D}_\mathcal{L} &= \left(\sum_{\ell=1,i_\ell\not\in \mathcal{M}_{\mathcal{L},\theta}^\star}^L \alpha_\ell \mathbf{N}_{\mathbf{D},\mathcal{M}^\star}(\ell,:)\right)\mathbf{D}_\mathcal{L}\notag\\
         &= \sum_{\ell=1,i_\ell\not\in \mathcal{M}_{\mathcal{L},\theta}^\star}^L \alpha_\ell \bigg(\mathbf{N}_{\mathbf{D},\mathcal{M}^\star}(\ell,:)\mathbf{D}_\mathcal{L}\bigg)\notag\\
           &= \sum_{\ell=1,i_\ell\not\in \mathcal{M}_{\mathcal{L},\theta}^\star}^L \omega_\ell \mathbf{e}_\ell^\intercal\label{eq:contranew2}
     \end{align}
     where \(\omega_\ell=\alpha_\ell \beta_\ell \) are all not zeros. Clearly, \eqref{eq:contranew1} and \eqref{eq:contranew2} contradict. Therefore, \(\mathbf{N}_{\mathbf{D},\mathcal{M}^\star}\) cannot have rank less than \(L\). In other words, \(\text{rank}_q(\mathbf{N}_{\mathbf{D},\mathcal{M}^\star} )=L\), and therefore, for any given \(K,L\), and \(M\) with \(K=L+M-\gamma\), and for any \(\mathbf{D}\in \mathbb{F}_q^{L\times K}\), the rate 
     \begin{equation*}
        R^{(\gamma)}(K,L,M)=L
    \end{equation*}
    is achievable with \(N=\ceil[\Big]{\frac{L}{\gamma}}\).

For a general \(K\), we partition the demand matrix \(\mathbf{D}\in \mathbb{F}_q^{L\times K}\) into \(\ceil{K/K'}\) sub-demand matrices, where \(K'=L+M-\gamma\). That is 
\begin{equation*}
    \mathbf{D} = [\mathbf{D}_1,\mathbf{D}_2,\dots, \mathbf{D}_{\ceil{K/K'}}]
\end{equation*}
where \(\mathbf{D}_\nu\in \mathbb{F}_q^{L\times K'}\) for all \(\nu\in [\ceil{K/K'}-1]\), and \(\mathbf{D}_{\ceil{K/K'}}\in \mathbb{F}_q^{L\times (K-K'(\ceil{K/K'}-1))}\). 
Then for every \(\ell\in [L]\), as in \eqref{eq:aligned}, we get
\begin{equation}
\label{eq:Partitionnew}
    F_\ell(\mathcal{W}) = \sum_{\nu=1}^{\ceil{K/K'}} F_\ell^{(\nu)}(.)
\end{equation}
where \(F_\ell^{(\nu)}(.)\) are linearly separable functions of subsets of datasets (see \eqref{eq:Fellnu}). In addition, the supports of these functions (the domain of the functions)  for different values of \(\nu\) are disjoint. Therefore, using the scheme described for the case \(K'=L+M-\gamma\) in the first part of this proof, the user can retrieve \(F_\ell^{(\nu)}\), for every \(\ell\in [L]\), with a communication cost \(L\) and with a set of \(\ceil{L/\gamma}\) servers. This is true for every \(\nu \in [\ceil{K/K'}]\). Consequently, using \eqref{eq:Partitionnew}, the user can retrieve the demanded functions \(\{F_\ell(\mathcal{W}):\ell\in [L]\}\). Therefore, for every \(\gamma \in [L]\), the rate 
\begin{equation}
    R^{(\gamma)}(K,L,M) = L\ceil[\Big]{\frac{K}{K'}}=L\ceil[\Big]{\frac{K}{L+M-\gamma}}
\end{equation}
is achievable with \(N^{(\gamma)} = \ceil[\Big]{\frac{L}{\gamma}}\ceil[\Big]{\frac{K}{K'}}\) servers. This completes the proof of Theorem~\ref{thm6}.
\end{proof}

\begin{rem}
    \label{rem:RvsN}
   If \(\gamma = L\), the task assignment corresponds to a configuration in which no subfunction is computed on more than one server. This assignment strategy can be applied independently of the demand matrix and is often referred to as `disjoint placement' \cite{TLDK}. From \eqref{eq:thmtradeoff}, the rate corresponding to the disjoint task assignment is
\begin{equation*}
R_{\text{disjoint}}(K,L,M)=R^{(\gamma=L)}(K,L,M) = \min\left\{K, L \left\lceil \frac{K}{M} \right\rceil \right\}.   
\end{equation*}
The work \cite{KhE2}, which considers a multi-user setting with server-to-user connectivity where each user makes a single request, adopts a `disjoint support' assumption. This assumption differs from the disjoint task assignment. However, if the number of users is set equal to \(L\) and full connectivity is allowed, the model in \cite{KhE2} reduces to the setting considered in this work. Under these conditions, the `disjoint support' assumption in \cite{KhE2} coincides with the disjoint task assignment. It is also worth noting that
\begin{align*}
\frac{R_{\text{disjoint}}(K,L,M)}{R_1(K,L,M)}&=\frac{R^{(\gamma=L)}}{R^{(\gamma=1)}}\leq\frac{\min\left\{K, L \left\lceil \frac{K}{M} \right\rceil \right\}}{\min\left\{K, L \left\lceil \frac{K}{L+M-1} \right\rceil \right\}}\\
&\leq \frac{\min\left\{K, L \left\lceil \frac{K}{M} \right\rceil \right\}}{L \left\lceil \frac{K}{L+M-1} \right\rceil} \leq \begin{cases}
\frac{ L \left\lceil \frac{K}{M} \right\rceil }{\frac{LK}{L+M-1}} &\text{if} \ L<M\\
\frac{K}{\frac{LK}{L+M-1}} &\text{if \(L\geq M\)}
\end{cases}\\
&\leq \begin{cases}
\frac{  \frac{K}{M} +1 }{ \frac{K}{L+M-1} } &\text{if $L<M$}\\
\frac{L+M-1}{L} &\text{if \(L\geq M\)}
\end{cases}\quad \\ &\leq \begin{cases}
\frac{L+M-1}{M}+ \frac{L+M-1}{K}  &\text{if $L<M$}\\
\frac{L+M-1}{L} &\text{if \(L\geq M\)}
\end{cases}\\
&\leq \begin{cases}
2+ 1  &\text{if $L<M$}\\
2 &\text{if \(L\geq M\)}
\end{cases}\\&\leq 3.
\end{align*}
Directly from the above, we can also conclude that the entire achievable rate region characterized in Theorem~\ref{thm6} lies within a constant factor of 9 of our novel lower bound presented in Theorem~\ref{th:conv}.
\end{rem}

\begin{figure}[h]
\begin{center}   
\includegraphics[width=11cm]{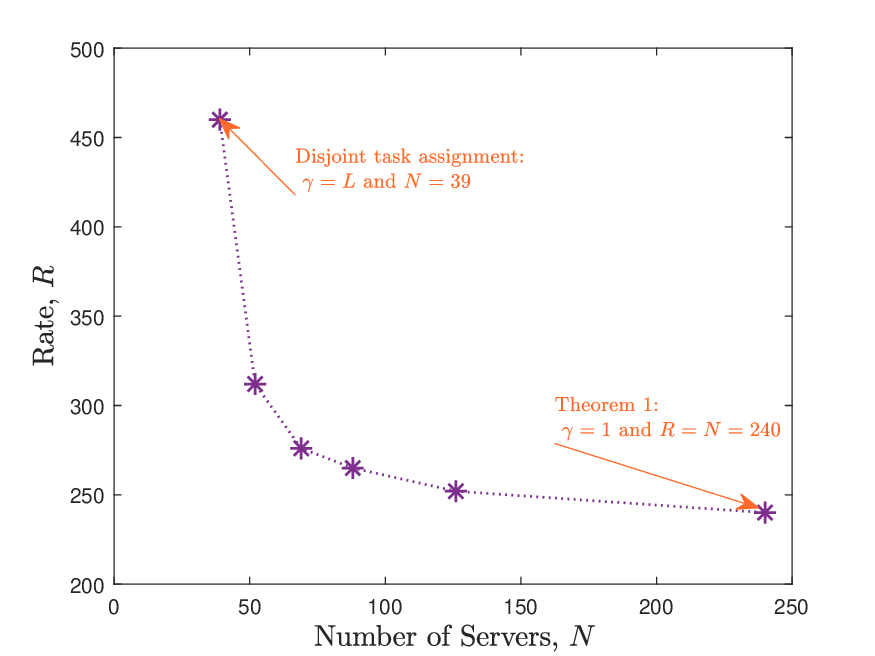}
\caption{Rate, \(R\) vs. number of servers, \(N\) for \(K=460,L=12,M=12\).}
\label{fig:RvsN1}
\end{center}
\end{figure}

\begin{rem}
    \label{rem:Grouping}
    Let \(\gamma_1<\gamma_2\leq L\) such that the conditions \(\gamma_1|L\) and \((L+M-\gamma_1)|K\), and \(\gamma_2|L\) and \((L+M-\gamma_2)|K\) hold. Then, from Theorem~\ref{thm6}, we get
\begin{equation*}
    \frac{R^{(\gamma_2)}}{R^{(\gamma_1)}}=\frac{L+M-\gamma_1}{L+M-\gamma_2}
\end{equation*}
and 
  \begin{equation*}
    \frac{N^{(\gamma_2)}}{N^{(\gamma_1)}}=\frac{\gamma_1(L+M-\gamma_1)}{\gamma_2(L+M-\gamma_2)}.
\end{equation*}  
Note that the penalty in rate is minimal, while we achieve a significant reduction in the required number of servers with no increase in per-server computation cost, especially when \(\gamma_1\) and \(\gamma_2\) are small.
\end{rem}
The tradeoff between rate and number of servers for the \((K=460,L=12,M=12)\) distributed linearly separable function computation problem is shown in Figure \ref{fig:RvsN1}.
Note that to achieve a rate \(R=240\), the system requires \(N=240\) servers. However, corresponding to \(\gamma=2\), a rate \(R=252\) is achievable with \(N=126\) servers. In other words, we could achieve almost the same rate with nearly half the number of servers compared to the previous case. Similarly, the \((N,R)\) pairs \((88,264), (69,276),(52,312),(39,460)\) are also achievable using Theorem~\ref{thm6}. Note that we have not included \((N,R)\) pairs corresponding to all possible values of \(\gamma\) between 1 and \(L\) in Figure \ref{fig:RvsN1}, since the ceiling functions in the expressions for \(R\) and \(N\) can make the curve non-decreasing for certain choices of \(\gamma\).

Suppose that the pairs \((N^{(\gamma_1)},R^{(\gamma_1)})\) and \((N^{(\gamma_2)},R^{(\gamma_2)})\) are achievable in the \(N-R\) curve. Then the pair \((\delta N^{(\gamma_1)}+(1-\delta)N^{(\gamma_2)},\delta R^{(\gamma_1)}+(1-\delta)R^{(\gamma_2)})\), \(\delta\in [0,1]\), is also achievable when certain divisibility conditions are satisfied. Let \(\gamma_1\) and \(\gamma_2\) be such that the conditions \(\gamma_1|L\) and \((L+M-\gamma_1)|K\), and \(\gamma_2|L\) and \((L+M-\gamma_2)|K\) hold. Then, using Theorem~\ref{thm6}, the pairs \((N^{(\gamma_1)},R^{(\gamma_1)})\) and \((N^{(\gamma_2)},R^{(\gamma_2)})\) are achievable, where \(N^{(\gamma_1)} = LK/(\gamma_1(L+M-\gamma_1))\), \(R^{(\gamma_1)} = LK/(L+M-\gamma_1)\), \(N^{(\gamma_2)} = LK/(\gamma_2(L+M-\gamma_2))\), and \(R^{(\gamma_2)} = LK/(L+M-\gamma_2)\). Further, assume that \(\delta \in [0,1]\) is such that the conditions \((L+M-\gamma_1)|\delta K\) and \((L+M-\gamma_2)|(1-\delta)K\) hold. We now show the achievability of the pair \((N,R)\), where \(N=\delta N^{(\gamma_1)}+(1-\delta)N^{(\gamma_2)}\) and \(R=\delta R^{(\gamma_1)}+(1-\delta)R^{(\gamma_2)}\). Let \(\mathbf{D}\in \mathbb{F}_q^{L\times K}\) be the demand matrix. Now, partition \(\mathbf{D}\) column-wise into two sub-demand matrices \(\mathbf{D}_1\in \mathbb{F}_q^{L\times \delta K}\) and \(\mathbf{D}_2\in \mathbb{F}_q^{L\times (1-\delta) K}\). Now, design task assignment and server transmissions using Theorem~\ref{thm6} on \(\mathbf{D}_1\) and \(\mathbf{D}_2\) with \(\gamma=\gamma_1\) and \(\gamma=\gamma_2\), respectively. Corresponding to \(\mathbf{D}_1\), the rate \(L(\delta K)/(L+M-\gamma_1)\) is achievable using \(L(\delta K)/(\gamma_1(L+M-\gamma_1))\) servers. Similarly, the rate \(L((1-\delta) K)/(L+M-\gamma_2)\) is achievable for the sub-demand matrix \(\mathbf{D}_2\) with \(L((1-\delta) K)/(\gamma_2(L+M-\gamma_2))\) servers. The user can decode the demanded functions as they are the sums of the corresponding sub-demand functions obtained from \(\mathbf{D}_1\) and \(\mathbf{D}_2\). Therefore, the rate 
\begin{align*}
    R &= \frac{L\delta K}{L+M-\gamma_1}+\frac{L(1-\delta)K}{L+M-\gamma_2}\\
    &=\delta R^{(\gamma_1)}+(1-\delta)R^{(\gamma_2)}
\end{align*}
is achievable using
\begin{align*}
    N &= \frac{L\delta K}{\gamma_1(L+M-\gamma_1)}+\frac{L(1-\delta)K}{\gamma_2(L+M-\gamma_2)}\\
    &=\delta N^{(\gamma_1)}+(1-\delta)N^{(\gamma_2)}
\end{align*}
servers.

\section{Conclusions}
In this work, we studied the problem of distributed computation of linearly separable functions and, under the assumption of linear encoding/decoding without subpacketization, characterized the optimal computation vs. communication-rate tradeoff to within a small constant factor across broad regimes. To establish the achievability results and tight converse bounds, we introduced novel design and analysis tools of broader relevance to coded distributed computing. On the achievability side, we introduced a nullspace-based design principle that simultaneously guides task assignment and server transmissions, ensuring exact decodability and attaining the global optimum when $K \leq L+M-1$. For $K>L+M-1$, we proposed two additional schemes; A first scheme based on column-wise partitioning that applies broadly and remains within a constant factor of optimal, and another, effective for $M \geq K/2$, that enjoys task assignments, at the servers, that are independent of the demand matrix.  
On the converse side, we reformulated the problem as a sparse matrix factorization to derive new information-theoretic lower bounds. These bounds not only certify the near-optimality of our schemes but also reveal a structural connection to combinatorial design theory, with the general covering number emerging as a fundamental limit. We further showed that communication efficiency degrades gracefully under server reduction, allowing comparable performance with fewer active servers.

Figures \ref{RvsM1} and \ref{RvsM2} illustrate the natural tradeoff between rate and computation cost $M$ in a $(K,L,M)$ distributed linearly separable function computing system. The rate is non-increasing in $M$, with extremes $R=K$ at $M=1$ and $R=L$ at $M=K$. The figures compare the achievable rates from Theorems~\ref{thm1} and \ref{thm2} with the lower bound of Theorem~\ref{th:conv} for various field sizes $q$, where discontinuities from the ceiling function in \eqref{eq:thm1} are most visible when $L$ is close to $K$ and the gap between the achievable rate and the converse diminishes as $q$ grows.

\begin{figure}[h!]
\begin{subfigure}{0.5\textwidth}
\centering
\includegraphics[width=1\linewidth]{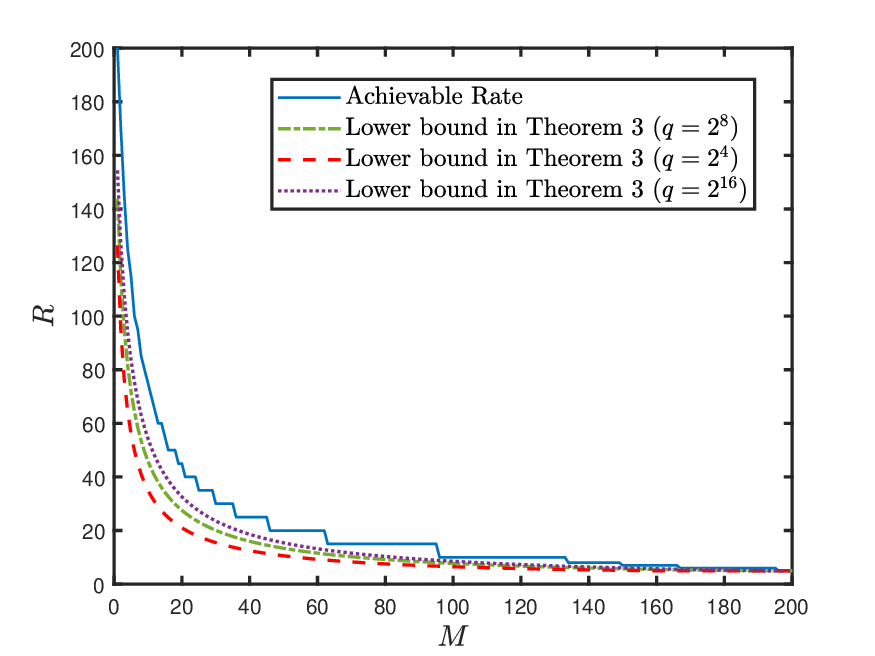} 
\caption{\((K=200,L=5,M)\) distributed linearly separable function computation problem.}
\label{RvsM1}
\end{subfigure}
\begin{subfigure}{0.5\textwidth}
\centering
\includegraphics[width=1\linewidth]{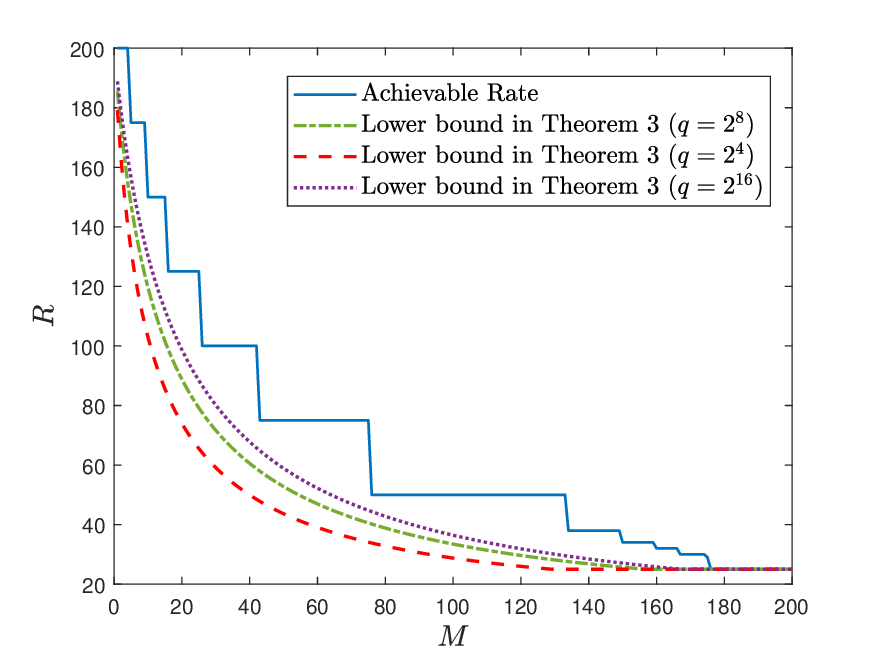}
\caption{\((K=200,L=25,M)\) distributed linearly separable function computation problem.}
\label{RvsM2}
\end{subfigure}
\caption{Rate \(R\) vs computation cost \(M\) for a fixed number of requests \(L\).}
\label{fig:RvsM}
\end{figure}

Similarly, Figures \ref{RvsL1} and \ref{RvsL2} illustrate the tradeoff between rate and the number of requested functions $L$ in $(K,L,M)$ distributed linearly separable function computation for $(K=200,M=5)$ and $(K=200,M=15)$. For fixed $K$ and $M$, the rate is generally non-decreasing in $L$, though the ceiling function in \eqref{eq:thm1} can introduce flat, non-monotonic segments. These flat regions appear in the achievable rate curves, while for large $L$ the behavior becomes linear since Scheme 1 achieves $R_1(K,L,M)=L$ whenever $L\geq K-M+1$.

\begin{figure}[h]
\begin{subfigure}{0.5\textwidth}
\centering
\includegraphics[width=\linewidth]{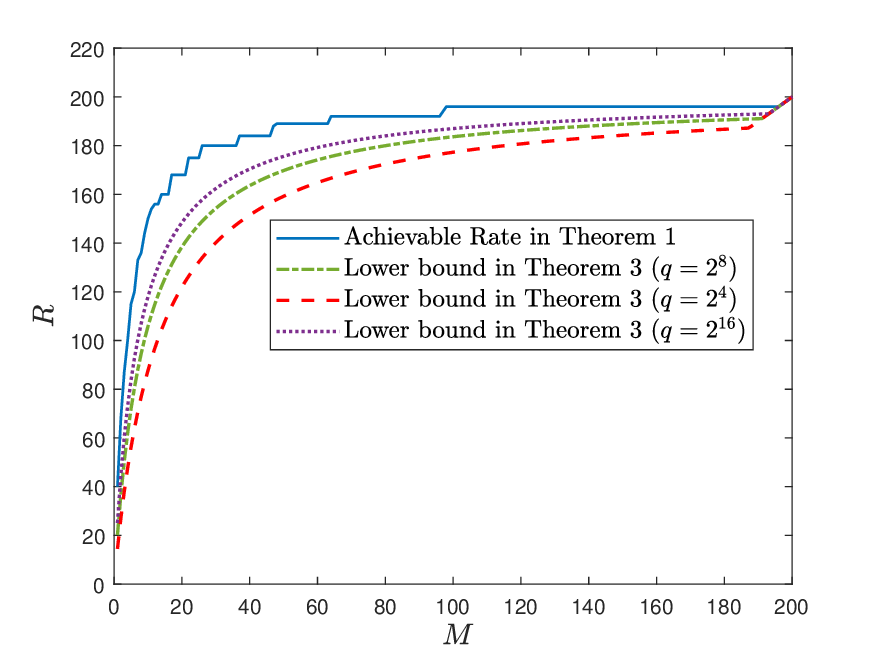} 
\caption{\((K=200,L,M=5)\) distributed linearly separable function computation problem.}
\label{RvsL1}
\end{subfigure}
\begin{subfigure}{0.5\textwidth}
\centering
\includegraphics[width=\linewidth]{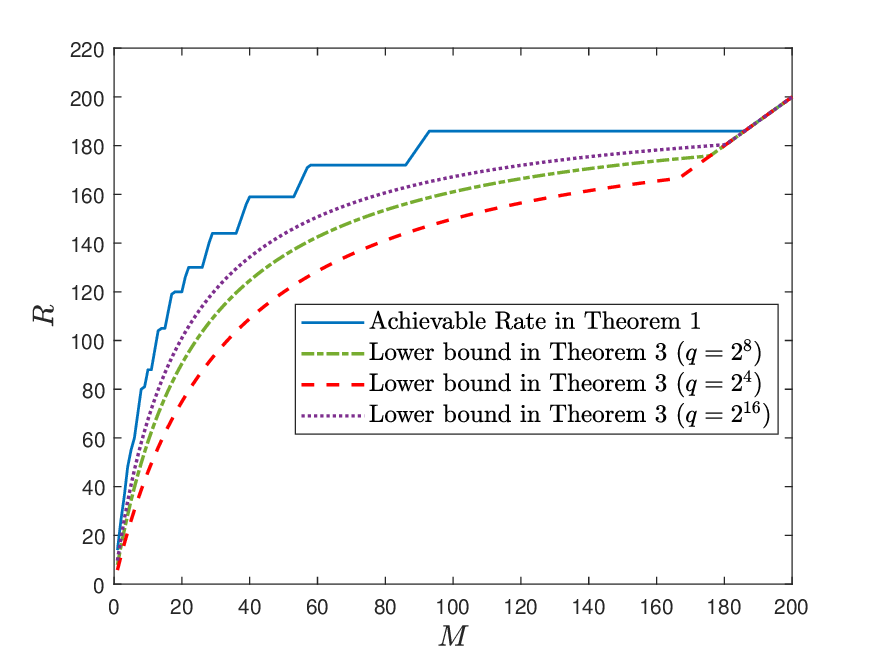}
\caption{\((K=200,L,M=15)\) distributed linearly separable function computation problem.}
\label{RvsL2}
\end{subfigure}
\caption{Rate \(R\) vs number of requests \(L\) for a fixed computation cost \(M\).}
\label{fig:RvsL}
\end{figure}

Several research problems remain open. A first direction is to address system-level constraints such as stragglers and heterogeneity across servers, where varying computational or communication capabilities may fundamentally reshape the optimal assignments and achievable tradeoffs. Another is to exploit data correlations, which, unlike the independent setting studied here, can be leveraged to reduce communication load and enable more efficient coding strategies.  Beyond these extensions, our framework can serve as a foundation for advancing AI-driven applications and modern distributed machine learning tasks. 

\bibliographystyle{IEEEtran}
\bibliography{Mybibliography}

\end{document}